\documentclass[reprint,amsmath,amssymb,aps,pra,floatfix,superscriptaddress]{revtex4-2}

\usepackage{graphicx}
\usepackage{dcolumn}
\usepackage{bm}
\usepackage{hyperref}
\usepackage{xcolor}
\usepackage{booktabs}
\usepackage{amsthm}

\newtheorem{theorem}{Theorem}
\newtheorem{proposition}[theorem]{Proposition}
\newtheorem{lemma}[theorem]{Lemma}
\newtheorem{corollary}[theorem]{Corollary}
\newtheorem{conjecture}[theorem]{Conjecture}

\hypersetup{
    colorlinks=true,
    linkcolor=blue,
    citecolor=blue,
    urlcolor=blue,
}

\begin{document}

\title{Lattice-quantile estimation of $\pi$ and convex-region integrals from coined two-dimensional quantum walks}

\author{Jen-Yu Chang}
\affiliation{Department of Electrophysics, National Yang Ming Chiao Tung University, Hsinchu, Taiwan}

\author{En-Jui Kuo}
\affiliation{Department of Electrophysics, National Yang Ming Chiao Tung University, Hsinchu, Taiwan}

\author{Chih-Yu Chen}
\affiliation{Quantum Information Center, Chung Yuan Christian University, Taoyuan, Taiwan}
\affiliation{Undergraduate Program in Intelligent Computing and Big Data, Chung Yuan Christian University, Taoyuan, Taiwan}

\author{Tsung-Wei Huang}
\affiliation{Quantum Information Center, Chung Yuan Christian University, Taoyuan, Taiwan}
\affiliation{Master Program in Intelligent Computing and Big Data, Chung Yuan Christian University, Taoyuan, Taiwan}

\date{\today}

\begin{abstract}
Monte Carlo integration is fundamentally limited by the $M^{-1/2}$ rate that the Cramér--Rao bound imposes on any sample-mean estimator of an expectation value, regardless of how the samples are drawn. Coined discrete-time quantum walks (DTQWs) are known to spread ballistically---their position variance scales as $T^2$ against the diffusive $T$ of classical random walks---yet this faster spreading has not been exploited for numerical integration. We show that coupling the ballistic scaling of a 2D DTQW to the Hardy--Huxley asymptotic for Gauss circle lattice counts produces estimators whose dominant error is a \emph{deterministic} number-theoretic residual controlled by walk depth $T$, not a statistical fluctuation controlled by sample count $M$. The construction replaces the empirical mean of a sample-mean estimator with the ratio $N(\hat{R})/\hat{R}^2$ of a lattice count to the square of a radial position quantile, a structural change that sidesteps the Cramér--Rao barrier. A single batch of measurements then propagates through classically precomputed multipliers to cover an entire family of integrals simultaneously. We develop the framework for convex smooth domains via Krätzel's lattice asymptotic and for smooth integrals with convex or annular super-level sets via Cavalieri's principle, and provide a parameter-free identity for the bias floor (validated to within $1.5\times$ across all tested depths). Every experiment is benchmarked against the classical random walk with the identical estimator to isolate the quantum contribution; the framework is oracle-free in the QAE sense (no controlled unitary encoding the integrand is required) and structurally distinct from quantum amplitude estimation and Szegedy-walk approaches. These ratios compare measurement counts at fixed precision and do not include quantum circuit execution cost.
\end{abstract}

\maketitle

\section{Introduction}
\label{sec:intro}

Monte Carlo (MC) integration is central to computational physics, statistical mechanics, and machine learning~\cite{metropolis1949,robert2004}. Drawing $M$ i.i.d.\ samples and forming the empirical mean gives an unbiased estimator with RMSE $\sim M^{-1/2}$, the asymptotically optimal rate for sample-mean estimators by the CLT and the Cram\'{e}r--Rao bound. Reaching precision $\varepsilon$ requires $M\sim\varepsilon^{-2}$ samples. Quasi-Monte Carlo (QMC) methods~\cite{niederreiter1992,caflisch1998} improve the prefactor via low-discrepancy sequences but retain the sample-mean structure and its asymptotic variance dependence.

\paragraph{Quantum acceleration of MC integration.}
Two quantum frameworks have been studied. Quantum amplitude estimation (QAE)~\cite{brassard2002,montanaro2015} estimates an expectation value with $O(\varepsilon^{-1})$ oracle calls (a quadratic improvement) and has been applied to option pricing~\cite{stamatopoulos2020,miyamoto2022} and QMC integration~\cite{herbert2022}, though the oracle construction may dominate resource cost on near-term hardware. Szegedy walks~\cite{szegedy2004} accelerate Markov-chain mixing by $\sqrt{\Delta_c}$ in the spectral gap, underpinning quantum partition-function~\cite{wocjan2008} and simulated-annealing~\cite{harrow2020} algorithms. Both frameworks accelerate sample-mean estimators and rely on phase estimation or oracle primitives demanding near-fault-tolerant resources.

\paragraph{This work.}
Coined discrete-time quantum walks (DTQWs)~\cite{aharonov2001,kempe2003} spread ballistically. The Konno limit theorem~\cite{konno2002,konno2005} gives $\mathrm{Var}(X_T)=\Theta(T^2)$ against the diffusive $\Theta(T)$, with 2D analogues established by Watabe \emph{et al.}~\cite{watabe2008}. We ask whether this faster spreading can be exploited for geometric integration, and identify a class of estimators in which it can. The construction joins
(i) the ballistic 2D DTQW position distribution, concentrating probability mass at scale $\Theta(T)$, and
(ii) the Hardy--Huxley lattice asymptotic $N(R)/R^2 = \pi + O(R^{\gamma-2})$ with $\gamma\le131/208$~\cite{hardy1915,huxley2003}.
Taking the $\alpha$-quantile $\hat{R}(\alpha)$ of $M$ radial measurements and forming $N(\hat{R})/\hat{R}^2$ yields an estimator of $\pi$ whose RMSE is dominated by a bias floor $O(T^{\gamma-2})$ falling with $T$, not $M$. For precision $\varepsilon$ strictly above this floor, a fixed $\varepsilon$-independent sample count suffices. This construction applies to integrals reducible to convex-region lattice counts; discrete partition functions, non-convex Boltzmann integrals, and oracle-encoded integrands lie outside its scope (Sec.~II\,G). DTQWs have been realised on IBM superconducting devices, with circuit-depth limits currently restricting reliable runs to short walks~\cite{razzoli2024,acasiete2020}, and a recent work~\cite{lee2025} couples DTQW with QAE for radiation transport, targeting a different problem class than the geometric integrals addressed here. The closest prior quantum estimate of $\pi$ using the in-circle lattice criterion is that of Gustafson \emph{et al.}~\cite{gustafson2023}, who estimate $\pi$ via quantum mean estimation, where quantum phase estimation accelerates a \emph{sample-mean} estimator over a uniform grid. Our construction differs on three structural axes. It is oracle-free in the QAE sense (Sec.~\ref{sec:discussion}) and uses no quantum phase estimation, the estimator is a rank statistic rather than a sample mean, and the accuracy-determining quantity is the deterministic lattice-count residual controlled by walk depth rather than the statistical variance controlled by query count.

Reaching precision $\varepsilon$ above the bias floor requires $T=\Theta(\varepsilon^{-1/(2-\gamma)})$ and a fixed, $\varepsilon$-independent $M$, giving a sample-count ratio $M_{\mathrm{MC}}/M_{\mathrm{QW}}=\Theta(\varepsilon^{-2})$ relative to uniform MC. We refer to this $\varepsilon^{-2}$ ratio as the \emph{ballistic acceleration}; the corresponding ratio relative to scrambled Sobol QMC is $\sim\!500\times$ by order-of-magnitude extrapolation (Sec.~III\,A). This is structurally distinct from the $\varepsilon^{-1}$ oracle-call scaling of QAE and the $\sqrt{\Delta_c}$ mixing-time scaling of Szegedy walks.

\paragraph{Contributions.}
Our contributions are six. (i) A sample-complexity bound (Proposition~\ref{thm:main}) for the disk estimator with explicit error decomposition into bias-floor and sampling terms. (ii) An envelope-form bias-floor identity (Theorem~\ref{thm:envelope}) that supplies a parameter-free bias floor at any $(T,\alpha)$ and reproduces the measured floors to within $1.5\times$. (iii) An asymptotic variance optimum (Proposition~\ref{thm:opt-alpha}) yielding a closed-form $\alpha^\star$ as a function of the Konno limit measure. (iv) A generalised estimator (Proposition~\ref{thm:generalised}) extending the framework to convex smooth regions via Kr\"{a}tzel's lattice asymptotic. (v) A Cavalieri-based estimator (Proposition~\ref{thm:cavalieri}) for smooth-function integrals, together with a corollary for annular super-level sets (Corollary~\ref{cor:annular-cavalieri}). (vi) Numerical experiments on $\pi$ ($\sim 7\times 10^4\times$ sample-count reduction at $T=200$, $M_{\mathrm{QW}}=100$ relative to uniform MC; the floor is reached already at $M=50$; $\sim 500\times$ relative to scrambled Sobol QMC by order-of-magnitude extrapolation), area estimation across eight 2D shapes (reductions of $10^2$ to $10^7\times$, Appendix~\ref{app:shapes}), single-batch error propagation across a 2D Gaussian cumulative distribution, and recovery of the 2D quantum-harmonic-oscillator energy ladder $E_n=\hbar\omega(n+1)$ ($n=0,\dots,3$) with a level spacing $10\times$ tighter than the diffusive random-walk baseline. Every experiment is benchmarked against the classical random walk paired with the identical estimator. The phase-averaged envelope slopes lie in $[-1.506,-1.489]$, with a mean of $-1.499\approx-3/2$, confirming the Hardy conjecture.

\section{Methods}
\label{sec:methods}

The theoretical development proceeds in five steps. We first recall why every sample-mean estimator (classical Monte Carlo and quasi-Monte Carlo alike) is bound to the $M^{-1/2}$ rate by the Cram\'{e}r--Rao limit (Sec.~\ref{sec:methods-classical}). We then introduce the ballistic 2D quantum walk whose position distribution concentrates at the lattice scale $\Theta(T)$ (Sec.~\ref{sec:methods-dtqw}), and combine it with the Gauss circle lattice asymptotic to build the disk estimator for $\pi$, whose error splits into a depth-controlled deterministic floor and a sample-controlled statistical term (Sec.~\ref{sec:methods-disk}). Two extensions follow from the same identity. The first is a generalised estimator for convex regions via Kr\"{a}tzel's bound (Sec.~\ref{sec:methods-generalised}); the second is a Cavalieri estimator for smooth integrals that fixes an entire integral family from one walk through classically precomputed multipliers (Sec.~\ref{sec:methods-cavalieri}). Finally we explain why the construction sidesteps the sample-mean barrier. The accuracy-determining quantity is relocated from a statistical term to a deterministic number-theoretic one, and we then delimit precisely which integrals fall in scope (Secs.~\ref{sec:reduction}--\ref{sec:methods-scope}).

\subsection{Classical baselines and the Cram\'er--Rao limit}
\label{sec:methods-classical}

We develop a family of \emph{lattice-quantile estimators} that share the construction $N_D(\hat R)/\hat R^2$ for some lattice count $N_D$ and rank statistic $\hat R$. The family has three members: the disk estimator for $\pi$ (Sec.~\ref{sec:methods-disk}), the generalised disk estimator for convex smooth domains (Sec.~\ref{sec:methods-generalised}), and the Cavalieri estimator for smooth integrals (Sec.~\ref{sec:methods-cavalieri}), with an annular-Cavalieri corollary for non-monotonic radial profiles (Cor.~\ref{cor:annular-cavalieri}). All members inherit the same bias-floor structure and Theorem~\ref{thm:envelope} envelope; their differences reside in the choice of lattice count and classical prefactor.

For the canonical task of estimating $\pi$ via the unit disk $D = \{(x,y) : x^2+y^2 \le 1\}$, dart-throw Monte Carlo forms the sample-mean indicator
\begin{equation}
\hat\pi_{\mathrm{MC}} = \frac{4}{M}\sum_{i=1}^M \mathbf 1_D(X_i, Y_i), \qquad (X_i,Y_i)\sim\mathrm{Unif}([-1,1]^2),
\label{eq:mc}
\end{equation}
with $\mathrm{RMSE}(\hat\pi_{\mathrm{MC}})=\sqrt{\pi(4-\pi)/M}\approx 1.64/\sqrt M$~\cite{robert2004}. The $M^{-1/2}$ rate is the optimal scaling for any unbiased sample-mean estimator of an expectation value, by the CLT together with the Cram\'er--Rao bound~\cite{robert2004,vandervaart1998}:
\begin{equation}
\mathrm{RMSE}(\hat\theta_M) = \Omega\bigl(M^{-1/2}\bigr)
\label{eq:cramer-rao}
\end{equation}
for unbiased sample-mean estimators of an expectation parameter $\theta=\mathbb E[f(X)]$. Quasi-Monte Carlo (QMC) replaces the i.i.d.\ samples by deterministic low-discrepancy sequences and obtains Koksma--Hlawka rates $O(M^{-1}\log^2 M)$ in $d=2$~\cite{niederreiter1992,caflisch1998}, but retains the sample-mean structure and is therefore still bound by Eq.~\eqref{eq:cramer-rao} as far as the expectation-value estimator is concerned. The lattice-quantile estimator we develop below sidesteps this constraint not by improving the sampling distribution but by replacing the estimator structure with a deterministic lattice-counting function of a sample quantile.

\subsection{Two-dimensional coined DTQW}
\label{sec:methods-dtqw}

Let $\mathcal H = \mathbb C^4 \otimes \ell^2(\mathbb Z^2)$ be the Hilbert space of a 2D coined DTQW with four-state coin register and integer position register. The walk operator is $U = S\,(C \otimes \mathbb I)$ with coin operator $C \in U(4)$ and conditional shift
\begin{equation}
S\,|c\rangle\,|x,y\rangle = |c\rangle\,|x + d_c, y + e_c\rangle,
\label{eq:shift}
\end{equation}
where $(d_c, e_c) = ((-1)^{c_1}, (-1)^{c_0})$ for the binary coin label $c = (c_1 c_0)_2 \in \{0,1,2,3\}$. We use the 2D Grover diffusion operator
\begin{equation}
G = 2|s\rangle\langle s| - \mathbb I_4,\qquad |s\rangle = \tfrac{1}{2}\textstyle\sum_{c=0}^{3}|c\rangle,
\label{eq:grover}
\end{equation}
the unique coin (up to local phases) respecting the full $D_4$ symmetry of the square lattice and the standard choice in 2D quantum walks~\cite{watabe2008}. Two further coins are used to test how stable the bias-floor scaling is under a change of coin: the separable Hadamard coin $H\otimes H$ and the 4-point Fourier coin $F$. Their explicit matrices and limit-measure properties are collected in Appendix~\ref{app:coins}.

The initial state
\begin{equation}
|\psi_0\rangle = \tfrac{1}{2}(|0\rangle + i|1\rangle + i|2\rangle - |3\rangle) \otimes |\mathbf 0\rangle
\label{eq:initial}
\end{equation}
is chosen so that the position distribution $p_T(x,y) = \sum_c |\langle c, (x,y)|U^T|\psi_0\rangle|^2$ carries the full dihedral $D_4$ symmetry of the square lattice (invariance under the eight reflections and rotations mapping $\mathbb Z^2$ to itself). This $D_4$ symmetry is what the disk estimator requires: it guarantees that the radial marginal $r/T$ is well-defined and that the $\alpha$-quantile $\hat R(\alpha)$ is isotropic, regardless of whether the underlying support is the velocity disk (Grover, Fourier) or the velocity square (Hadamard). Full continuous rotational invariance holds only in the Grover and Fourier cases; the weaker $D_4$ symmetry, which holds for all three coins, suffices for everything that follows.

\begin{lemma}[Konno weak limit]
\label{lem:konno}
Let the symmetric initial state be that of Eq.~\eqref{eq:initial}.
\begin{enumerate}
\item[(a)] (Grover coin, Watabe \emph{et al.}~\cite{watabe2008}.) For the 2D coined DTQW with Grover coin, the rescaled position $X_T/T$ converges in distribution to a probability measure $\mu_G$ supported on the velocity disk $\{(v_x,v_y):v_x^2+v_y^2\le 1/2\}$, with an additional atomic component at the origin (localisation peak; see Appendix~A).
\item[(b)] (Hadamard coin.) The separable Hadamard coin $H\otimes H$ acts as two independent 1D Hadamard walks on $x$ and $y$. The 1D Konno weak limit theorem~\cite{konno2002,konno2005} therefore gives a product measure $\mu_H=\mu_{1\text{D}}\otimes\mu_{1\text{D}}$ supported on the velocity square $\{|v_x|,|v_y|\le 1/\sqrt 2\}$.
\item[(c)] (Fourier coin, numerical statement.) For the 4-point Fourier coin we observe numerically (Appendix~A, Fig.~\ref{fig:appendix-coin-2d}) that $X_T/T$ converges to a measure $\mu_F$ supported in the velocity disk with two sharp axis-aligned peaks; a rigorous weak-limit theorem in the form of (a)--(b) is not, to our knowledge, available in the literature for this coin in 2D, though stationary-phase analyses for related coin families exist. We treat the disk-quantile results that depend on the Fourier coin as numerical-evidence-based throughout.
\end{enumerate}
The second moment $\mathbb E_{\mu}[v_x^2+v_y^2]=\nu_C^2$ is coin-dependent (Appendix~A): $\nu_G^2=0.364$, $\nu_F^2=0.402$ (both $<1/2$, consistent with the disk support of (a) and (c)), and $\nu_H^2\approx 0.586>1/2$ (the Hadamard support extends into the corners of the velocity square, so the second moment can exceed the disk-inscribed value $1/2$).
\end{lemma}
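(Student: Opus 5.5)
The plan is to go to Fourier space and use the standard Grimmett--Janson--Scudo / Konno machinery. The walk is translation invariant, so I would write $\hat U(k)=\hat S(k)\,C$ with $\hat S(k)=\mathrm{diag}\bigl(e^{i(d_c k_x+e_c k_y)}\bigr)$. Let $\hat\Psi_T(k)=\hat U(k)^T\hat\Psi_0$. The characteristic function of $X_T/T$ is then
\[
\mathbb E\, e^{i\xi\cdot X_T/T}=\int\frac{d^2k}{(2\pi)^2}\,\bigl\langle\hat\Psi_T(k),\hat\Psi_T(k+\xi/T)\bigr\rangle .
\]

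Next I diagonalise $\hat U(k)=\sum_j e^{i\omega_j(k)}\Pi_j(k)$. Expanding to first order in $\xi/T$, the cross terms between distinct eigenvalues oscillate as $e^{iT(\omega_j-\omega_l)}$. By Riemann--Lebesgue (after excluding a measure-zero degeneracy set) they vanish. What remains is
\[
\int \sum_j \|\Pi_j(k)\hat\Psi_0\|^2\, e^{i\xi\cdot\nabla\omega_j(k)}\,\frac{d^2k}{(2\pi)^2}.
\]
So $\mu$ is the pushforward of the weights $\|\Pi_j\hat\Psi_0\|^2 dk$ under the group velocity maps $k\mapsto\nabla\omega_j(k)$, which is the standard form.

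For part (b) the work is already done. Since $H\otimes H$ with the shift $(d_c,e_c)$ factors as $U_x\otimes U_y$, one only needs to note that the initial coin state in Eq.~\eqref{eq:initial} is a product: $\tfrac12(|0\rangle+i|1\rangle)\otimes$-type, i.e.\ $\tfrac{1}{\sqrt2}(|0\rangle+i|1\rangle)^{\otimes 2}$ up to labelling, which gives $1-i\cdot i=2\ldots$; checking coefficients $(1,i,i,-1)$ confirms this. The limit is therefore $\mu_{1\mathrm D}\otimes\mu_{1\mathrm D}$ by Konno's theorem. The 1D Hadamard support $|v|\le1/\sqrt2$ gives the velocity square.

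For part (a) I would take the Grover dispersion relation from Watabe et al. It has a flat band pair $\omega=0,\pi$, which produces the atom at the origin (localisation), and a dispersive band $\cos\omega=\tfrac12(\cos k_x+\cos k_y)$ up to their convention. Bounding $|\nabla\omega|^2\le 1/2$ from this relation gives the disk support. Part (c) is asserted as numerical only, so it needs no proof beyond the figure.

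The moment values $\nu_C^2$ come from integrating $\sum_j\|\Pi_j\hat\Psi_0\|^2|\nabla\omega_j|^2$ numerically, which I treat as computation. The main obstacle is rigour in (a). The Grover flat bands make eigenprojections nonsmooth at conical points, and the Riemann--Lebesgue step needs these sets to be null. I would rely on Watabe's result rather than reprove it, and only verify the support bound $|\nabla\omega|^2\le1/2$. The $\nu_H^2>1/2$ remark then follows from the product structure, with $2\,\mathbb E_{\mu_{1D}}[v^2]=2(1-1/\sqrt2)\approx0.586$.
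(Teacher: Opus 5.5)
The gap is in (a). Your treatment of (b) is sound, and more careful than the paper, which cites Konno without checking that $\tfrac12(1,i,i,-1)=\bigl[\tfrac{1}{\sqrt2}(1,i)\bigr]^{\otimes 2}$. Your $\nu_H^2=2(1-1/\sqrt2)$ is also correct.

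In (a) you import $\cos\omega=\tfrac12(\cos k_x+\cos k_y)$ and the bound $|\nabla\omega|^2\le\tfrac12$. Both belong to Watabe's walk, which has nearest-neighbour steps. The shift of Eq.~\eqref{eq:shift} moves every coin state diagonally, $(d_c,e_c)=(\pm1,\pm1)$. Your own symbol $\hat S(k)=\mathrm{diag}(e^{i(d_ck_x+e_ck_y)})=\mathrm{diag}(e^{ik_x},e^{-ik_x})\otimes\mathrm{diag}(e^{ik_y},e^{-ik_y})$ is exactly this product symbol, and you rely on it in (b). Plugging it into the Grover eigenvalue problem gives the dispersive bands $\cos\omega=\pm\cos k_x\cos k_y$, hence
\[
|\nabla\omega|^2=1-\frac{\sin^2k_x\,\sin^2k_y}{1-\cos^2k_x\,\cos^2k_y}\le 1 .
\]
Equality holds on the lines $\sin k_x=0$ or $\sin k_y=0$, and near the conical point $k=0$ one has $\nabla\omega\approx\mp k/|k|$. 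Equivalently, in the coordinates $u=(x+y)/2$, $w=(x-y)/2$ your walk is Watabe's walk, because the Grover coin is invariant under relabelling of the coin states. The map $(u,w)\mapsto(u+w,u-w)$ stretches lengths by $\sqrt2$, so Watabe's disk $u^2+w^2\le\tfrac12$ becomes $v_x^2+v_y^2\le 1$. The one step you planned to verify, $|\nabla\omega|^2\le\tfrac12$, is therefore false for this walk. Deferring to Watabe's theorem, which is all the paper does for (a), only hides the missing change of variables.

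Your Fourier set-up makes the failure quantitative:
\begin{itemize}
\item For $\hat\Psi_0=\tfrac12(1,i,i,-1)$, the flat-band weights $\|\Pi_{\pm1}(k)\hat\Psi_0\|^2$ integrate to exactly $\tfrac12$.
\item The dispersive bands then give $\nu_G^2=1-2/\pi\approx 0.3634$, which reproduces the quoted $0.364$.
\item This forces the non-atomic half of $\mu_G$ to have mean $|v|^2\approx 0.73>\tfrac12$, which is impossible inside the disk of radius $1/\sqrt2$.
\item The same contradiction follows from the paper's own $\sim 35\%$ atom, since then $\nu_G^2\lesssim 0.65\cdot\tfrac12<0.364$.
\end{itemize}
What your method actually establishes is an atom of mass $\tfrac12$ at the origin plus a continuous part supported on the closed unit disk. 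The radius $1/\sqrt2$ in (a), and the value $v_{\mathrm{edge}}=1/\sqrt2$ used downstream, need to be corrected rather than proved.
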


The $T^2$ scaling of $\mathrm{Var}(X_T)$ implied by the lemma is the ballistic property of quantum walks, contrasting with the diffusive $\Theta(T)$ scaling of classical random walks. The disk estimator we develop next uses only the radial marginal $r/T = \sqrt{v_x^2 + v_y^2}$, for which the relevant question is the location of the radial $\alpha$-quantile rather than the precise shape of the support.

\subsection{Disk estimator for $\pi$}
\label{sec:methods-disk}

Given $M$ projective position measurements $\{(X_i, Y_i)\}_{i=1}^M$ from the DTQW, define the radial samples $r_i = \sqrt{X_i^2 + Y_i^2}$ and the $\alpha$-quantile $\hat R(\alpha) = r_{(\lceil \alpha M\rceil)}$ (where $r_{(k)}$ is the $k$-th order statistic). The \emph{disk estimator} is
\begin{equation}
\boxed{\;\hat\pi(\alpha; T, M) = \frac{N(\hat R(\alpha))}{\hat R(\alpha)^2}\;}
\label{eq:disk}
\end{equation}
with $N(R) = \#\{(i,j)\in\mathbb Z^2 : i^2 + j^2 \le R^2\}$ computed classically.

The estimator combines two ingredients:

\begin{lemma}[Gauss circle theorem~\cite{hardy1915,huxley2003}]
\label{lem:gauss}
$N(R) = \pi R^2 + O(R^\gamma)$ with the cleanest proven upper bound $\gamma \le 131/208 \approx 0.6298$ due to Huxley~\cite{huxley2003}. Hardy~\cite{hardy1915} proved the $\Omega$-type lower bound $N(R) - \pi R^2 = \Omega\bigl((R\log R)^{1/4}\bigr)$, equivalent to $\gamma \ge 1/4$, and conjectured the optimal exponent to be $\gamma = 1/2 + \varepsilon$ for every $\varepsilon > 0$. Equivalently, $N(R)/R^2 = \pi + O(R^{\gamma-2})$, and we refer to the conjectured rate $\gamma = 1/2$ (in the loose sense above) as the \emph{Hardy line} in what follows.
\end{lemma}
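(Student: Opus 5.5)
The statement collects classical results: Huxley's upper bound, Hardy's $\Omega$-bound, and the equivalent normalized form. For the first two I would cite \cite{huxley2003,hardy1915} rather than reprove them. What I would prove is (i) the elementary bound $N(R)=\pi R^2+O(R)$, which shows that some finite exponent $\gamma\le 1$ exists, (ii) how the sharper exponent enters through exponential sums, and (iii) the equivalence with the normalized form that Eq.~\eqref{eq:disk} uses.

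\emph{Step 1 (Gauss's argument).} Attach to each lattice point $(i,j)$ with $i^2+j^2\le R^2$ the unit square $[i,i+1)\times[j,j+1)$. Every point of such a square lies within $\sqrt2$ of $(i,j)$. The union of these squares therefore contains the disk of radius $R-\sqrt2$ and is contained in the disk of radius $R+\sqrt2$. Comparing areas gives
\[
\pi(R-\sqrt2)^2\le N(R)\le \pi(R+\sqrt2)^2,
\]
and hence $|N(R)-\pi R^2|\le 2\sqrt2\,\pi R+2\pi$.

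\emph{Step 2 (improving the exponent).} Write
\[
N(R)=\sum_{|i|\le R}\bigl(2\lfloor\sqrt{R^2-i^2}\rfloor+1\bigr).
\]
Replace $\lfloor u\rfloor$ by $u-\tfrac12-\psi(u)$, where $\psi$ is the sawtooth function. The main term $\sum_{|i|\le R} 2\sqrt{R^2-i^2}$ equals $\pi R^2+O(1)$ by Euler--Maclaurin. The error is therefore governed by $\sum_i\psi\bigl(\sqrt{R^2-i^2}\bigr)$. Expanding $\psi$ in its truncated Fourier series turns this into exponential sums $\sum_i e\bigl(h\sqrt{R^2-i^2}\bigr)$.
\begin{itemize}
\item Van der Corput's second-derivative test gives the exponent $2/3$.
\item Huxley's refinement of the Bombieri--Iwaniec discrete Hardy--Littlewood method gives $131/208$.
\end{itemize}
This last step is the genuine obstacle. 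It is a deep analytic-number-theory result, so I would invoke it as stated in \cite{huxley2003}.

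\emph{Step 3 (lower bound).} Hardy's $\Omega$-result comes from the Voronoi-type Bessel expansion of $N(R)-\pi R^2$ in terms of $r_2(n)$. A mean-square argument shows that the error cannot be $o\bigl((R\log R)^{1/4}\bigr)$. Since $(R\log R)^{1/4}$ exceeds $R^{\gamma}$ for every $\gamma<1/4$, no exponent below $1/4$ can hold. I would cite this result as well.

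\emph{Step 4 (normalized form).} Divide by $R^2$ for $R\ge1$. The bound $|N(R)-\pi R^2|\le CR^\gamma$ gives $|N(R)/R^2-\pi|\le CR^{\gamma-2}$. Multiplying back by $R^2$ gives the reverse implication, so the two forms are equivalent with the same constant. This normalized form is exactly what the disk estimator \eqref{eq:disk} needs when evaluated at $\hat R(\alpha)=\Theta(T)$.
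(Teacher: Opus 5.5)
Your route is the paper's. The paper gives no argument beyond citing Huxley for the upper bound and Hardy for the $\Omega$-bound, and its normalised form is exactly the division by $R^2$ in your Step~4, which is correct. Your elementary $O(R)$ bound in Step~1 is also correct: the displayed lower inequality needs $R\ge\sqrt2$, but the final bound holds for all $R$. However, two of your supporting sketches contain claims that are false as written. Neither is load-bearing, since you cite the deep results anyway.

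\emph{Step~2.} The main term $\sum_{|i|\le R}2\sqrt{R^2-i^2}$ is \emph{not} $\pi R^2+O(1)$.
Euler--Maclaurin leaves the remainder $4\int_0^R\psi(x)f'(x)\,dx$ with $f(x)=\sqrt{R^2-x^2}$. Since $f'(x)=-x/\sqrt{R^2-x^2}$ is unbounded at $x=R$, this remainder is of order $\sqrt R$. By Poisson summation the difference equals $2\sum_{k\ge1}R\,J_1(2\pi kR)/k$, which at integer $R$ is about $-1.18\sqrt R$ (already $-2.34$ at $R=4$).

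The same endpoint degeneracy, $|f''(x)|=R^2(R^2-x^2)^{-3/2}\to\infty$, also prevents applying the second-derivative test to $\sum_i e(h\sqrt{R^2-i^2})$ over the whole range $|i|\le R$. The standard fix is to use the eight-fold symmetry and count only $0<i\le R/\sqrt2$, $i<j\le\sqrt{R^2-i^2}$, handling the axes and the diagonal separately. On that range $|f'|\le 1$, so Euler--Maclaurin gives $O(1)$. Also $R^{-1}\le|f''|\le 2^{3/2}R^{-1}$ there, and truncating the Fourier series of $\psi$ at $H=R^{1/3}$ then yields van der Corput's exponent $2/3$.

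\emph{Step~3.} The mean-square asymptotic $\int_0^X P(x)^2\,dx\sim cX^{3/2}$, with $P(x)=N(\sqrt x)-\pi x$, gives $N(R)-\pi R^2=\Omega(R^{1/2})$. This does imply the stated $\Omega((R\log R)^{1/4})$. But a mean-square argument cannot produce a logarithmic factor, and it is not Hardy's argument. Hardy's logarithmic gain comes from applying Dirichlet's simultaneous-approximation theorem to the Bessel series.

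Hardy's theorem is also misstated in the lemma, and your proposal inherits the slip. In the variable $x=R^2$ it reads $P(x)=\Omega((x\log x)^{1/4})$. In the radius variable this is $N(R)-\pi R^2=\Omega(R^{1/2}(\log R)^{1/4})$, not $\Omega((R\log R)^{1/4})$. The correct consequence is therefore $\gamma\ge 1/2$, not merely $\gamma\ge 1/4$. This is why Hardy's conjecture is $\gamma=1/2+\varepsilon$, and your own mean-square route already gives this stronger bound.
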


\begin{proposition}[Sample-complexity bound]
\label{thm:main}
Under Lemma~\ref{lem:konno}, the disk estimator~\eqref{eq:disk} satisfies
\begin{equation}
\mathrm{RMSE}(\hat\pi) = \sqrt{\,\underbrace{C_1^2\, T^{2(\gamma - 2)}}_{\text{bias floor}^2} \;+\; \underbrace{C_2(\alpha)^2\, \frac{T^{2(\gamma - 3 + \kappa(\alpha))}}{M}}_{\text{sampling variance}}\,},
\label{eq:rmse}
\end{equation}
where $C_1, C_2$ are coin-dependent constants and $\kappa(\alpha) \in (0,1]$ characterises edge concentration of the limit measure $\mu$. As $M\to\infty$ at fixed $T$ the sampling variance vanishes and the RMSE saturates at the bias floor $C_1 T^{\gamma-2}$; this floor is the defining feature of the estimator. To reach a target precision $\varepsilon$ one therefore first sets $T = \Theta(\varepsilon^{-1/(2-\gamma)})$ so that the floor lies below $\varepsilon$, and then takes $M$ large enough that the sampling term also lies below $\varepsilon$, a fixed, depth-dependent but $\varepsilon$-independent value, in contrast to the $M=\Theta(\varepsilon^{-2})$ growth of sample-mean estimators.
\end{proposition}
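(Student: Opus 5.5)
The plan is to split the error via the mean-squared-error decomposition, $\mathrm{MSE}=\mathrm{bias}^2+\mathrm{Var}$. The estimator is written as a deterministic function of a single rank statistic, $\hat\pi=g(\hat R)$ with $g(R)=N(R)/R^2$, and then expanded around the population quantile $R_\alpha(T)$ of the radial marginal of $p_T$. By Lemma~\ref{lem:konno}, $R_\alpha(T)=T\,\rho_\alpha+o(T)$, where $\rho_\alpha$ is the $\alpha$-quantile of $|v|$ under $\mu$; here we use the $D_4$ symmetry only to ensure that the radial marginal, and hence $\rho_\alpha>0$, is well defined for $\alpha$ away from the atom at the origin. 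The decomposition we aim for is
\begin{equation}
\hat\pi-\pi=\bigl[g(R_\alpha)-\pi\bigr]+\bigl[g(\hat R)-g(R_\alpha)\bigr],
\end{equation}
with the first bracket deterministic and the second carrying all the randomness in $M$.

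\textbf{Bias floor.} Lemma~\ref{lem:gauss} gives $|g(R)-\pi|\le C R^{\gamma-2}$. Substituting $R=R_\alpha(T)\sim\rho_\alpha T$ yields $|g(R_\alpha)-\pi|\le C\rho_\alpha^{\gamma-2}T^{\gamma-2}$, which defines $C_1$ (up to the fluctuating sign of the lattice residual, which is why $C_1$ is an envelope constant rather than an exact prefactor). As $M\to\infty$, $\hat R\to R_\alpha$ almost surely by consistency of sample quantiles. Since $g$ is continuous away from the integer jumps of $N$, this gives the saturation statement.

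\textbf{Sampling term.} I would use the Bahadur representation / asymptotic normality of the sample quantile, $\hat R-R_\alpha\approx\mathcal N\bigl(0,\alpha(1-\alpha)/(M f_T(R_\alpha)^2)\bigr)$, where $f_T$ is the radial density at depth $T$. Under ballistic scaling, $f_T(r)\approx T^{-1}f_\mu(r/T)$. When the limit density near $\rho_\alpha$ behaves like an edge singularity, I would encode it through an exponent $\kappa(\alpha)$ that absorbs how $f_T(R_\alpha)$ scales, so that $\mathrm{sd}(\hat R)\sim T^{\kappa(\alpha)}/\sqrt M$ times a constant; for a regular interior point this is $\kappa=1$, and edge peaks make $\kappa<1$. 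The derivative of $g$ is $g'(R)=N'(R)/R^2-2N(R)/R^3$. The smooth part cancels to leading order ($2\pi R/R^2-2\pi R^2/R^3=0$), so the relevant local variation comes from the lattice residual $E(R)=N(R)-\pi R^2$, giving $g(\hat R)-g(R_\alpha)\approx\bigl[E(\hat R)-E(R_\alpha)\bigr]/R^2+O(E\cdot\delta R/R^3)$. Treating the residual increment over a window of width $\delta R$ heuristically as size $R^{\gamma-1}\delta R$ produces $R^{\gamma-3}\delta R\sim T^{\gamma-3+\kappa}/\sqrt M$, which defines $C_2(\alpha)$. Squaring and adding the two bounds gives Eq.~\eqref{eq:rmse}. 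The cross term vanishes to leading order because the quantile fluctuation is asymptotically mean-zero.

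\textbf{Main obstacle.} The hard step is justifying the linearisation of $g$. $N$ is a step function, so $g$ is not differentiable, and $E$ is not Lipschitz at scale $R^{\gamma-1}$ in any proven sense. I would therefore first replace the pointwise derivative by an averaged increment bound: bound $\mathbb E|E(\hat R)-E(R_\alpha)|^2$ by conditioning on $\hat R$ lying in a window of width $O(T^{\kappa}/\sqrt M)$, and use the mean-square behaviour of the circle-problem residual over short intervals. The resulting constant is then stated as coin- and $\alpha$-dependent rather than explicit. This is also why the statement is phrased with unspecified $C_1,C_2$ and the envelope exponent $\gamma$. The final $\varepsilon$-bookkeeping is immediate once both terms are in place: set $C_1T^{\gamma-2}\le\varepsilon/\sqrt2$, which gives $T=\Theta(\varepsilon^{-1/(2-\gamma)})$, then choose $M\ge 2C_2^2T^{2(\gamma-3+\kappa)}/\varepsilon^2$. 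Because $\gamma-3+\kappa<\gamma-2$, this requirement is bounded independently of $\varepsilon$ at the chosen $T$.
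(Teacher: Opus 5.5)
Your proposal follows the paper's proof sketch essentially step for step. You use the same split of $\hat\pi-\pi$ into a deterministic Gauss-circle bracket at the quantile radius (bounded by Lemma~\ref{lem:gauss} to give $C_1T^{\gamma-2}$) and a fluctuation bracket, in which the Bahadur--Kiefer deviation $\Theta(T/(\sqrt M\,g(v^\star)))$ of $\hat R$ is pushed through an effective derivative $\Theta(T^{\gamma-3})$ of $N(R)/R^2$, with $\kappa(\alpha)$ absorbing the $T$-dependence of the density; centring at the finite-$T$ population quantile rather than $Tv^\star$ is a harmless refinement. The one step you add, justifying that derivative through short-interval mean squares of the circle residual, would not actually give an increment of size $R^{\gamma-1}\delta R$ --- because $E$ oscillates on unit scale, its mean-square increment is of order $R\,\delta R$ (up to logarithms) for windows $\delta R\lesssim1$ and of order $R$ for wider ones --- so, exactly as in the paper, the linearisation must be accepted as a heuristic rather than proved.
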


\noindent\emph{Remark (status of the bound).} This result is stated as a Proposition rather than a Theorem for two reasons. First, the exponent $\kappa(\alpha)$ is not independently identifiable from our data: across all configurations tested, the sampling term is already sub-dominant to the bias floor at $M=5$ (Sec.~\ref{sec:results-scaling}), so only its smallness, not its scaling, is empirically constrained. For the Grover coin we give a closed-form derivation in Cor.~\ref{cor:kappa-grover} (Appendix~\ref{app:theory}) under Conjecture~\ref{prop:grover-density}, yielding $\kappa(\alpha)=1$ with an $\alpha$-suppressed prefactor; the Hadamard and Fourier cases remain phenomenological. See Appendix~\ref{app:coins} for the heuristic Bahadur--Kiefer argument and its caveats for the Grover localised measure, and Lemma~\ref{lem:truncated-bk} for the truncated-BK statement under which the Grover argument applies. Second, the leading $C_1 T^{\gamma-2}$ term is an \emph{envelope}, not a pointwise description; the precise behaviour of the measured floor is captured by Theorem~\ref{thm:envelope} below, and explicit upper bounds on $C_1, C_2$ for Grover are given in Prop.~\ref{prop:grover-constants}.

\begin{proof}[Proof sketch]
Decompose $\hat\pi-\pi=[N(\hat R)/\hat R^2-N(R^\star)/R^{\star 2}]+[N(R^\star)/R^{\star 2}-\pi]$ with $R^\star=T v^\star(\alpha)$ from Lemma~\ref{lem:konno}. The second bracket is $O(T^{\gamma-2})$ by Lemma~\ref{lem:gauss}, contributing the bias floor. For the first bracket the Bahadur--Kiefer representation~\cite{csorgo-revesz1981,vandervaart1998} gives $\hat R-R^\star=\Theta(T M^{-1/2}/g(v^\star))$ with $g$ the Konno radial density at the quantile; the delta method on $N(R)/R^2$ (derivative $\Theta(T^{\gamma-3})$) propagates this to the sampling term. The $T$-scaling contribution of $g(v^\star)^{-1}$ is encoded in $\kappa(\alpha)\in(0,1]$; the explicit Grover value $\kappa=1$ is derived in Cor.~\ref{cor:kappa-grover}.
\end{proof}

\begin{theorem}[Envelope-form bias floor]
\label{thm:envelope}
Let $E(R)=N(R)-\pi R^2$ denote the Gauss-circle residual and $\rho_{\hat R}$ the sampling density of the quantile $\hat R(\alpha)$ from $M$ radial measurements of the 2D coined DTQW after $T$ steps. Define the bias floor as
\begin{equation}
\mathrm{floor}(T,\alpha)^2 \;:=\; \lim_{M\to\infty}\mathbb E\bigl[(\hat\pi-\pi)^2\bigr].
\end{equation}
Then, under Lemma~\ref{lem:konno} and the assumption that $g_G$ is continuous and positive at $v^\star(\alpha)$ (or the truncated form of Lemma~\ref{lem:truncated-bk}), this floor satisfies the parameter-free identity
\begin{equation}
\mathrm{floor}(T,\alpha) \;=\; \Bigl[\,\int \bigl(N(R)/R^2-\pi\bigr)^2\,\rho_{\hat R}(R)\,dR\Bigr]^{1/2}.
\label{eq:floor-rms}
\end{equation}
The right-hand side is fully classical: $E(R)$ is a number-theoretic function of the lattice $\mathbb Z^2$, and $\rho_{\hat R}$ depends only on the radial CDF of the Konno limit measure and on $\alpha$. The leading envelope is $C_1 T^{\gamma-2}$ in the sense of $\mathbb P$-mean (Lemma~\ref{lem:gauss}); the pointwise value at any $(T,\alpha)$ is given by the integral above and reflects the oscillating sign of $E(R)/R^2$ at scale $\Delta R = O(1)$.
\end{theorem}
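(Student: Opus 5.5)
The plan is to observe that, at every fixed $M$, the disk estimator~\eqref{eq:disk} is a \emph{deterministic} function of the single random variable $\hat R(\alpha)$. The identity~\eqref{eq:floor-rms} will then follow from the change-of-variables formula (law of the unconscious statistician) combined with a limit argument in $M$.

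Set $h(R):=N(R)/R^2-\pi=E(R)/R^2$. Then $\hat\pi-\pi=h(\hat R)$, and for each finite $M$ we have exactly
\begin{equation*}
\mathbb E\bigl[(\hat\pi-\pi)^2\bigr]=\int h(R)^2\,\rho^{(M)}_{\hat R}(R)\,dR ,
\end{equation*}
where $\rho^{(M)}_{\hat R}$ is the law of the order statistic $r_{(\lceil\alpha M\rceil)}$. Since $r_i^2\in\mathbb Z_{\ge0}$, this law is purely atomic on $\{\sqrt n\}$, so the integral is really a finite sum. This also avoids any continuity issue for $h$, which jumps at every lattice radius.

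Next I would make $h$ harmless on the relevant support. Because $|X_T|,|Y_T|\le T$, the radial samples satisfy $r_i\le\sqrt2\,T$. The only singular point is $R=0$, which carries the localisation atom of Lemma~\ref{lem:konno}(a). Here I use the hypothesis that $g_G$ is continuous and positive at $v^\star(\alpha)$, or the truncated form of Lemma~\ref{lem:truncated-bk}. This forces $\alpha$ to exceed the atomic mass at the origin. A Chernoff bound on the binomial count $\#\{i:r_i\le R_0\}$ then gives $\mathbb P(\hat R\le R_0)\le e^{-cM}$ for a fixed $R_0$ below the finite-$T$ quantile $R_T(\alpha)$. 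On $\{R\ge R_0\}$ the function $h^2$ is bounded, since $|h(R)|\le 1+\pi+N(R_0)/R_0^2$. On the exceptional event we must define $\hat\pi$ by a convention, for instance $\hat\pi:=0$ when $\hat R=0$. Its contribution vanishes as $M\to\infty$.

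For the limit, standard quantile consistency on the finite-$T$ atomic distribution applies: when $\alpha$ is not exactly at a jump level of the CDF of $r^2$, the pmf $\rho^{(M)}_{\hat R}$ converges to a limit $\rho_{\hat R}$ which is a point mass at $R_T(\alpha)$ (or a two-point mixture at a tie). Because the support is finite and $h^2$ is bounded there, the convergence of expectations is immediate. This yields~\eqref{eq:floor-rms} with $\rho_{\hat R}$ read as this limiting sampling law. For finite $M$ the same formula, taken with $\rho^{(M)}_{\hat R}$, is exact.

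The right-hand side is classical. $h$ is number-theoretic, and by Lemma~\ref{lem:konno} the location $R_T(\alpha)=T v^\star(\alpha)\,(1+o(1))$ is fixed by the radial CDF of $\mu$. The envelope claim then follows from Lemma~\ref{lem:gauss}: $|h(R)|\le C R^{\gamma-2}$, which with $R\asymp T v^\star$ gives $\mathrm{floor}\le C_1T^{\gamma-2}$, where $C_1=C\,(v^\star)^{\gamma-2}$. The oscillating sign of $E$ explains why the pointwise value falls below this envelope.

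I expect the main obstacle to be the treatment of the origin atom and the interpretation of ``density'' for a lattice-valued quantile. Specifically, one must ensure that $\alpha$ avoids the localisation mass and the jump levels of the radial CDF, so that the limiting law is well defined and $h$ remains bounded. The rest of the argument is bookkeeping.
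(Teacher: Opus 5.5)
Your proof is correct, and it reaches the identity by a different and more elementary route than the paper's.

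\textbf{The paper's route.} The paper invokes the Bahadur--Kiefer representation with the Konno density $g_G(v^\star)$ to get $\hat R\to R^\star=Tv^\star(\alpha)$ in probability at fixed $T$. It then moves the limit inside the expectation by dominated convergence, using that $N(R)/R^2$ is bounded for $R\ge 1$. It reads $\rho_{\hat R}$ as a delta at $R^\star$ in the strict limit, and as a Gaussian neighbourhood of width of order $T/(g_G(v^\star)\sqrt M)$ in the pre-asymptotic regime.

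\textbf{Your route.} You write the mean-squared error exactly as $\int h^2\rho^{(M)}_{\hat R}$ at every $M$. You use that the fixed-$T$ radial law is atomic with finite support, and you obtain the limit from binomial concentration of the order statistic, with no Bahadur--Kiefer step at all.

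\textbf{What yours buys.} It gives genuine rigour where the paper is loose.
\begin{itemize}
\item A density-based Bahadur--Kiefer statement does not literally apply to the atomic finite-$T$ law.
\item The actual limit point is the finite-$T$ population quantile $R_T(\alpha)$, not $Tv^\star(\alpha)$.
\item $R_T(\alpha)$ is itself a lattice radius, hence a jump point of $h$, so a continuous-mapping step would not go through directly. Your finite-support argument needs no continuity of $h$.
\item You also treat the origin atom and ties explicitly.
\end{itemize}

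\textbf{What the paper's buys.} It supplies the $\sqrt M$ Gaussian width. That is the relevant description in the intermediate regime where the width still covers many atoms, and it is what Table~\ref{tab:floor} actually evaluates. Your exact finite-$M$ formula is the honest version of this. Your strict limit also makes plain that the identity collapses to the single-radius value $h(R_T(\alpha))^2$.

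\textbf{Two small refinements.} First, the continuity and positivity of $g_G$ at $v^\star(\alpha)$ is also what gives $R_T(\alpha)/T\to v^\star(\alpha)$: weak convergence transfers quantiles where the limiting CDF is strictly increasing. Invoke the hypothesis there as well, not only to clear the origin atom. Second, this convergence locates $R_T(\alpha)$ only up to $o(T)$, far coarser than the $O(1)$ scale on which $h$ oscillates. So the clause that $\rho_{\hat R}$ depends only on the Konno CDF is justified for the envelope bound, but not for the pointwise value.
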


\begin{proof}
By the Bahadur--Kiefer representation~\cite{csorgo-revesz1981,vandervaart1998}, $\sqrt M(\hat R(\alpha)/T - v^\star(\alpha))$ converges weakly to $\mathcal N(0,\alpha(1-\alpha)/g_G(v^\star)^2)$ as $M\to\infty$ at fixed $T$. In particular $\hat R/T\xrightarrow{P}v^\star$, so $\hat R\xrightarrow{P}R^\star(\alpha,T):=Tv^\star(\alpha)$. The map $R\mapsto N(R)/R^2$ is bounded for $R\ge 1$ (uniformly in $T$ for $\hat R\ge 1$), so by dominated convergence $\mathbb E[(\hat\pi-\pi)^2]$ converges as $M\to\infty$ to $\int(N(R)/R^2-\pi)^2\,\rho_{\hat R}^\infty(R)\,dR$, where $\rho_{\hat R}^\infty$ is the limiting quantile sampling density (a delta measure at $R^\star$ under the strict weak limit, with finite-$M$ Gaussian fluctuations giving $\rho_{\hat R}$ in the pre-asymptotic regime relevant to Table~\ref{tab:floor}). The identity~\eqref{eq:floor-rms} follows. The $\mathbb P$-mean envelope $C_1 T^{\gamma-2}$ then follows by inserting Huxley's bound $|E(R)/R^2|\le K_H R^{\gamma-2}$ (Lemma~\ref{lem:gauss}) and evaluating $\int R^{2(\gamma-2)}\,\rho_{\hat R}(R)\,dR=\Theta(T^{2(\gamma-2)})$ via $R^\star=\Theta(T)$.
\end{proof}

\noindent\emph{Remark (empirical validation).} Equation~\eqref{eq:floor-rms} is the strongest parameter-free prediction of the framework. Table~\ref{tab:floor} compares it to measured floors at $T\in\{50,100,200,400\}$ and finds agreement to within $1.5\times$ (mean over 32 seeds at each depth). It also explains two features a monotone power law cannot: the spread of fitted exponents across coins, and the non-monotonic small-$M$ behaviour reported in Sec.~\ref{sec:results-scaling}.

\begin{proposition}[Asymptotic variance optimum]
\label{thm:opt-alpha}
The asymptotically optimal quantile minimising the sampling term in~\eqref{eq:rmse} is
\begin{equation}
\alpha^\star = \arg\min_{\alpha \in (0, 1)} \frac{\alpha(1-\alpha)}{\mu_1(v^\star(\alpha))^2},
\label{eq:opt-alpha}
\end{equation}
where $\mu_1$ is the first marginal of the Konno measure $\mu$. Numerical optimisation at $T=200$, $M=200$ (Appendix~A, Fig.~\ref{fig:appendix-coin}(c)) yields $\alpha^\star \approx 0.93$ for the Grover and Hadamard coins, and $\alpha^\star \approx 0.98$ for the Fourier coin. The Fourier coin's higher optimum reflects its sharply peaked wavefront near the velocity-disk boundary, where the lattice-counting identity is most accurate.
\end{proposition}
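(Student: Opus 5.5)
The plan is to start from the sampling term of Proposition~\ref{thm:main} and make its $\alpha$-dependence explicit, so that the optimisation separates from $T$ and $M$. First I would fix $T$ and use the Bahadur--Kiefer representation already invoked in the proof of Theorem~\ref{thm:envelope}: $\sqrt M\,(\hat R(\alpha)/T - v^\star(\alpha))$ converges to $\mathcal N\bigl(0,\alpha(1-\alpha)/g(v^\star)^2\bigr)$. Here $g$ is the density of the limiting radial statistic under $\mu$, and $v^\star(\alpha)$ is its $\alpha$-quantile.

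Next I would apply the delta method to $R\mapsto N(R)/R^2$ at $R^\star = T v^\star(\alpha)$. This gives an asymptotic variance of the form
\begin{equation*}
\frac{T^2}{M}\,\bigl|\partial_R (N(R)/R^2)\bigr|_{R^\star}^2\,\frac{\alpha(1-\alpha)}{g(v^\star(\alpha))^2}.
\end{equation*}
The derivative factor is $\Theta(T^{\gamma-3})$ uniformly over compact $\alpha$-ranges. Because it is only an envelope, I would replace it by its smoothed $\mathbb P$-mean magnitude, as in Theorem~\ref{thm:envelope}. This makes the prefactor $\alpha$-independent to leading order, and the $\alpha$-dependence of the sampling term reduces to $\alpha(1-\alpha)/g(v^\star(\alpha))^2$. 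Minimising over $\alpha\in(0,1)$ then gives an expression of the form~\eqref{eq:opt-alpha}.

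To match the statement's $\mu_1$, I would argue that by the $D_4$ symmetry the relevant one-dimensional density evaluated at the quantile location is the marginal appearing in~\eqref{eq:opt-alpha}. Equivalently, I would read $\mu_1(v^\star)$ as the density of the one-dimensional (radial) law pushed forward from $\mu$, at $v^\star$. Existence of a minimiser follows from two facts. The objective diverges as $\alpha\to 0$ only if $g\to 0$ faster than $\sqrt{\alpha}$, and it vanishes as $\alpha\to1$ only if the density at the support edge is nonzero. So I would restrict to an interval $[\delta,1-\delta]$ on which $g$ is continuous and positive, as assumed in Theorem~\ref{thm:envelope}, and invoke compactness.

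The main obstacle is that the derivative factor $\partial_R(N/R^2)$ is not truly $\alpha$-independent. It oscillates at scale $\Delta R=O(1)$, and the Grover measure carries an atom at the origin, so the radial density is not globally continuous. I would handle the first issue by averaging the squared derivative over the quantile's fluctuation window, which has width $\gg1$ for moderate $M$, and the second by the truncated Bahadur--Kiefer statement of Lemma~\ref{lem:truncated-bk}. The numerical values of $\alpha^\star$ would then be obtained simply by evaluating the objective on the estimated densities at $T=200$. These values would be presented as numerical rather than as part of the proof.
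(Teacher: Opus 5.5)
Your route (Bahadur--Kiefer for the radial quantile, the delta method through $N(R)/R^2$, then dropping the factors you regard as $\alpha$-independent) is the heuristic the paper itself relies on in the proof sketch of Proposition~\ref{thm:main}. The paper gives no separate proof, only the remark that follows. Reading $\mu_1$ as the radial marginal $g$ is the right reading, but not ``by $D_4$ symmetry'': that symmetry does not make the $x$-marginal density equal the radial density. Two steps, however, fail.

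\textbf{First gap: the prefactor is not $\alpha$-independent.} The envelope slope at $R^\star=Tv^\star(\alpha)$ is $(Tv^\star(\alpha))^{\gamma-3}$. This is exactly why Proposition~\ref{prop:grover-constants} carries $v^\star(\alpha)^{\gamma-3}$ inside $C_2(\alpha)$. Your computation therefore yields the objective $v^\star(\alpha)^{2(\gamma-3)}\alpha(1-\alpha)/g(v^\star(\alpha))^2$, which reduces to \eqref{eq:opt-alpha} only on a range of $\alpha$ where $v^\star$ is essentially constant. Your treatment of the oscillation also undercuts the delta method. $N(R)/R^2$ is a step function divided by $R^2$: its piecewise derivative is $-2N(R)/R^3\approx-2\pi/R$, with upward jumps at lattice radii, so $\Theta(T^{\gamma-3})$ is the slope of the envelope, not of the function. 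Moreover, if the fluctuation window of $\hat R$ is $\gg1$, linearisation is invalid. The spread of $\hat\pi$ is then the integral of Theorem~\ref{thm:envelope}, of size $\sim T^{\gamma-2}$ essentially independently of $M$, and no $M^{-1/2}$ term is left to optimise.

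\textbf{Second gap, and the decisive one: existence of a minimiser.} You note that the objective tends to $0$ as $\alpha\to1$ when the edge density is non-zero, and that is what happens here.
\begin{itemize}
\item Grover: under Conjecture~\ref{prop:grover-density}, Corollary~\ref{cor:kappa-grover} gives $g_G(v^\star)=2A_G^2/(1-\alpha)$, so near $\alpha=1$ the objective behaves as $\alpha(1-\alpha)^3/(4A_G^4)\to0$.
\item Hadamard: the product of two inverse-square-root 1D Konno densities has a radial density tending to a positive constant at the corner radius $|v|=1$, so the objective is $O(1-\alpha)$. It even vanishes at the interior quantile $\alpha=F_H(1/\sqrt2)\approx0.36$, where tangency to the edge caustics makes $g$ logarithmically divergent.
\end{itemize}
The infimum is therefore $0$, while the objective is strictly positive at $\alpha=0.93$. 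So $0.93$ cannot be a minimiser of \eqref{eq:opt-alpha} for either coin, and the Fourier case rests only on numerics anyway. Restricting to $[\delta,1-\delta]$ and invoking compactness merely returns the artificial endpoint $1-\delta$. The extra factor $v^{\star\,2(\gamma-3)}$ does not change this, since $v^\star\to v_{\mathrm{edge}}>0$.

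In the paper the values $0.93$ and $0.98$ come from an empirical sweep of the total RMSE at $T=200$, $M=200$ (Fig.~\ref{fig:appendix-coin}(c)). The subsequent remark itself reads that sweep as a scan of the bias-floor landscape whose agreement with \eqref{eq:opt-alpha} is ``partly coincidental''. Any interior optimum of the sampling term would have to come from effects outside your asymptotics, and these vanish as $T,M\to\infty$:
\begin{itemize}
\item the extreme-order-statistics regime $M(1-\alpha)=O(1)$ (only about four samples exceed the $0.98$-quantile at $M=200$);
\item finite-$T$ smoothing of the edge singularity.
\end{itemize}
What your argument can legitimately deliver is the leading $\alpha$-dependence of a heuristic sampling term on a fixed window of $\alpha$, with its infimum at the upper end of that window. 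The quoted $\alpha^\star$ values must be presented as empirical, not as minimisers of \eqref{eq:opt-alpha}.
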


\noindent\emph{Remark (practical relevance of $\alpha^\star$).} Equation~\eqref{eq:opt-alpha} minimises the \emph{sampling} term of Eq.~\eqref{eq:rmse}; we downgrade it from a Theorem to a Proposition because that term is sub-dominant to the bias floor for all $M\ge 5$ tested (Sec.~\ref{sec:results-scaling}), so the variance optimum is not what governs the measured RMSE in practice. The empirically relevant $\alpha$-dependence enters instead through Theorem~\ref{thm:envelope}: changing $\alpha$ moves $R^\star(\alpha)$ along the oscillating number-theoretic residual, so the $\alpha$-sweep of Appendix~A (Fig.~\ref{fig:appendix-coin}(c)) should be read as a scan of the floor landscape rather than as variance minimisation. The agreement of its minima with Eq.~\eqref{eq:opt-alpha} is therefore partly coincidental; a principled choice of $\alpha$ would target zeros of the residual (quantile tuning, Sec.~\ref{sec:results-scaling}).

\subsection{Generalised disk estimator for convex regions}
\label{sec:methods-generalised}

The framework extends to arbitrary convex smooth domains. Let $D \subset \mathbb R^2$ be a bounded convex region with $C^2$ boundary. Define $N_D(L) = \#\{(i,j) \in \mathbb Z^2 : (i/L, j/L) \in D\}$.

\begin{lemma}[Krätzel theorem~\cite{kratzel1988}]
\label{lem:kratzel}
For $D$ convex with $C^2$ boundary of non-vanishing Gaussian curvature, $N_D(L) = |D|\, L^2 + O(L^{\gamma_D})$ with $\gamma_D \le 2/3$.
\end{lemma}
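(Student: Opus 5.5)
The statement is Lemma~\ref{lem:kratzel}. The plan is to reduce the count $N_D(L)$ to a sum of one-dimensional lattice-point problems along vertical lines, and then control the fractional-part sums with van der Corput's estimates. After a translation, which changes $N_D(L)$ only by terms we can absorb, we split $\partial D$ at its leftmost and rightmost points into an upper graph $y=f_+(x)$ and a lower graph $y=f_-(x)$ over $[a,b]$. Here $f_+$ is concave, $f_-$ is convex, and both are $C^2$ with $|f_\pm''|\asymp 1$ away from the endpoints, because the curvature is non-vanishing.

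Counting column by column gives
\[
N_D(L)=\sum_{aL\le i\le bL}\bigl(\lfloor Lf_+(i/L)\rfloor-\lceil Lf_-(i/L)\rceil+1\bigr).
\]
We write $\lfloor u\rfloor=u-\tfrac12-\psi(u)$ with $\psi(u)=\{u\}-\tfrac12$. The main term $\sum_i L(f_+-f_-)(i/L)$ equals $|D|L^2+O(L)$ by Euler--Maclaurin. The $O(L)$ boundary corrections cancel against the $\pm\tfrac12$ constants up to $O(1)$, except near the vertical tangents. There we swap the roles of the axes: we cover $\partial D$ by finitely many arcs, on each of which the boundary is a graph over the axis whose tangent slope is bounded. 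The error therefore reduces to sums of the form $\sum_{n\in I}\psi(L\phi(n/L))$ over $|I|\ll L$, with $\phi\in C^2$ and $|\phi''|\asymp 1$.

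The key step is to bound
\[
S=\sum_{n\in I}\psi\bigl(F(n)\bigr),\qquad F(n)=L\phi(n/L),\qquad F''\asymp L^{-1}.
\]
We truncate the Fourier series of $\psi$ at height $H$, using Vaaler's trigonometric polynomial. This gives
\[
S\ll \frac{L}{H}+\sum_{h\le H}\frac1h\Bigl|\sum_{n\in I}e\bigl(hF(n)\bigr)\Bigr|.
\]
The inner exponential sum has second derivative $\lambda\asymp h/L$. Van der Corput's second-derivative test bounds it by $L\lambda^{1/2}+\lambda^{-1/2}\ll (hL)^{1/2}+(L/h)^{1/2}$. Summing over $h$ gives $S\ll L/H+(HL)^{1/2}+L^{1/2}\log H$. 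Choosing $H=L^{1/3}$ yields $S\ll L^{2/3}$, which is exactly $\gamma_D\le 2/3$.

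The main obstacle is the bookkeeping at the junctions between arcs. At a point with a vertical or horizontal tangent, the curvature condition gives only a square-root-type graph, so the bound $|\phi''|\asymp1$ fails. We handle this by choosing the cover so that every arc's tangent slope stays in a compact range (for example $|\text{slope}|\le 2$ in the arc's own coordinates), which keeps $\phi''$ bounded above and below. The arcs must still be glued so that lattice points near the cuts are neither lost nor double-counted; the simplest route is to cut along lines of slope $\pm1$ through the centre. The pieces meeting each cut line then have $O(1)$ interface error per unit of the endpoint sums, so the total discrepancy is $O(1)$. Since the lemma only asserts the $2/3$ exponent, the classical van der Corput argument suffices; Krätzel's sharper constants and the refinements below $2/3$ are not needed.
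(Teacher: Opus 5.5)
The paper does not prove this lemma; it imports it from Kr\"{a}tzel's monograph. Your argument would therefore be a self-contained substitute, and van der Corput's method is indeed the classical route to $\gamma_D\le 2/3$ under exactly these hypotheses. Your analytic core is sound: Vaaler's polynomial, the second-derivative test with $\lambda\asymp h/L$, and $H=L^{1/3}$ give $\sum_{n\in I}\psi(F(n))\ll L^{2/3}$ whenever $|F''|\asymp L^{-1}$ on an interval of length $O(L)$.

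\textbf{The gap.} It lies in the geometric reduction, at the step you yourself flag as the main obstacle. Four sectors cut out by lines of slope $\pm1$ through a centre need not have arcs that are bounded-slope graphs over an axis. Take the ellipse with semi-axes $10$ and $1$ whose major axis is tilted by $0.2$ rad from the vertical. Its highest and rightmost points are about $(1.97,9.80)$ and $(2.22,8.70)$, both near the upper tip, and its lowest and leftmost points are their negatives. Compare the values of $y-x$ and of $y+x$ at these four points: any line of slope $+1$ or $-1$ that splits them two and two separates the lowest and leftmost points from the highest and rightmost ones. So no choice of centre puts them in four different sectors. Horizontal and vertical tangents alternate around $\partial D$, so some sector arc then contains one of each. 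That arc is a graph over neither axis, and $|\phi''|\asymp1$ cannot be arranged on it. Steepness is determined by the tangent direction, not by position relative to a centre. The cuts must therefore sit at the points where the tangent has slope $\pm1$, and you then need an exact way to recount the steep pieces by rows.

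\textbf{The missing tool.} That mechanism is the standard interchange identity. Take $F$ increasing on $[A,B]$ with inverse $G$, and count the lattice points of $\{A<x\le B,\ F(A)<y\le F(x)\}$ by columns and by rows. Expand with $\lfloor u\rfloor=u-\tfrac12-\psi(u)$ and $\lceil v\rceil=v+\tfrac12+\psi(-v)$. Apply first-order Euler--Maclaurin to $\sum F(n)$ and $\sum G(m)$, and use $\int_A^BF+\int_{F(A)}^{F(B)}G=BF(B)-AF(A)$. All terms of size $L$ cancel, leaving
\[
\sum_{A<n\le B}\psi\bigl(F(n)\bigr)=\sum_{F(A)<m\le F(B)}\psi\bigl(-G(m)\bigr)+\int_A^B\psi F'+\int_{F(A)}^{F(B)}\psi G'+O(1).
\]
Apply this (or its mirror image for decreasing $F$) to $Lf_+(t/L)$ and $-Lf_-(t/L)$ between each vertical tangent and the abscissa where $|f_\pm'|=1$. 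On that stretch:
\begin{itemize}
\item $G'\in(0,1]$ is monotone, so $\int\psi G'=O(1)$.
\item $|G''|=|F''|/F'^3$ is comparable to the curvature divided by $L$, so your exponential-sum bound handles the row sum.
\item $\int\psi F'=O(\sqrt L)$, because $F$ rises by $O(\sqrt L)$ on the first unit interval and $F'$ is monotone and $O(\sqrt L)$ thereafter.
\end{itemize}
The total is then $O(L^{2/3})$.

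\textbf{Two smaller corrections.} First, the main-term bound $|D|L^2+O(L)$ would on its own destroy the exponent. There is also nothing for it to cancel against, since $\lfloor u\rfloor-\lceil v\rceil+1=u-v-\psi(u)-\psi(-v)$ exactly. The actual Euler--Maclaurin remainder is $\int\psi W'$ with $W(t)=L(f_+-f_-)(t/L)$, which is concave and zero at both ends; it is $O(\sqrt L)$ for the same reason as above. Second, drop the preliminary translation. Nothing in the argument depends on where $D$ sits, while claiming that a translation changes $N_D(L)$ by an absorbable amount presupposes the estimate being proved.
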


\noindent The curvature condition is genuine: it fails on the axes of the super-ellipse $x^4+y^4\le1$, where the curvature vanishes. Empirically (Appendix~\ref{app:shapes}) the estimator still performs well there, but with a slightly larger floor than the strictly-curved cases, consistent with the boundary degeneracy contributing a sub-leading lattice-counting correction.

\begin{proposition}[Generalised disk estimator]
\label{thm:generalised}
Assume $D\subset\mathbb R^2$ is bounded, convex, and has $C^2$ boundary with non-vanishing Gaussian curvature (the hypotheses of Lemma~\ref{lem:kratzel}). Then the estimator
\begin{equation}
\hat{|D|}(\alpha; T, M) = \frac{N_D(\hat R(\alpha))}{\hat R(\alpha)^2}
\label{eq:generalised-disk}
\end{equation}
satisfies $\mathrm{RMSE}(\hat{|D|}) \le C_1 |D|\, T^{\gamma_D - 2} + C_2 T^{\gamma_D - 3 + \kappa}/\sqrt M$, of the same form as~\eqref{eq:rmse} with $\gamma$ replaced by $\gamma_D\le 2/3$. As with Proposition~\ref{thm:main}, the exponent $\kappa$ enters phenomenologically.
\end{proposition}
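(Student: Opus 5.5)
The plan mirrors the proof sketch of Proposition~\ref{thm:main}, with Lemma~\ref{lem:kratzel} in place of Lemma~\ref{lem:gauss}. Let $R^\star=Tv^\star(\alpha)$ be the deterministic quantile location supplied by Lemma~\ref{lem:konno}. Decompose
\begin{equation*}
\hat{|D|}-|D| \;=\; \Bigl[\tfrac{N_D(\hat R)}{\hat R^2}-\tfrac{N_D(R^\star)}{R^{\star 2}}\Bigr] \;+\; \Bigl[\tfrac{N_D(R^\star)}{R^{\star 2}}-|D|\Bigr].
\end{equation*}
Then apply Minkowski's inequality in $L^2(\mathbb P)$. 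This gives an additive bound $\mathrm{RMSE}\le(\text{bias})+(\text{sampling})$, which matches the stated form, as opposed to the quadrature sum in~\eqref{eq:rmse}.

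For the bias term, Lemma~\ref{lem:kratzel} gives $|N_D(L)/L^2-|D||\le K_D L^{\gamma_D-2}$ for $L\ge L_0$. Since $R^\star=\Theta(T)$, this is $O(T^{\gamma_D-2})$. The constant is written as $C_1|D|$ by normalising $K_D$ relative to the area. The dependence of $K_D$ on $D$ comes through the curvature bounds and the perimeter, and it is absorbed into $C_1$.

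For the sampling term, I would follow the route of Theorem~\ref{thm:envelope}. The Bahadur--Kiefer representation gives $\hat R-R^\star=O_P(TM^{-1/2}/g(v^\star))$, with the $T$-dependence of $g(v^\star)^{-1}$ encoded in $\kappa$. Note that $\hat R$ is the same radial quantile as in the disk case; only the lattice count changes. The delta method is then applied to $L\mapsto N_D(L)/L^2=|D|+E_D(L)/L^2$. The constant $|D|$ has zero derivative, so only the residual contributes, with effective derivative $O(T^{\gamma_D-3})$.

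The main obstacle is exactly this derivative step. $N_D$ is a step function, so $E_D(L)/L^2$ is not differentiable, and the ``derivative $\Theta(T^{\gamma-3})$'' in the disk sketch is really heuristic. I would handle it with an increment bound instead of a derivative. Between jumps, $E_D(L)/L^2$ varies like $-2|D|\,\delta/L$ over a shift $\delta$. Across lattice jumps, its total increment over a window $[R^\star-\delta,R^\star+\delta]$ is controlled by the count of lattice points in a thin annulus $\partial D$-neighbourhood. That count is $O(\delta L+L^{\gamma_D})$, again by Lemma~\ref{lem:kratzel}. Combining this with the quantile fluctuation $\delta\sim TM^{-1/2}T^{\kappa-1}$ yields the stated $T^{\gamma_D-3+\kappa}/\sqrt M$ behaviour only after treating the oscillatory part phenomenologically, exactly as in Proposition~\ref{thm:main}. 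I would therefore state that step with $\kappa$ entering phenomenologically, as the proposition itself allows.

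Finally, both terms are bounded uniformly for $\hat R\ge1$, since $N_D(L)/L^2$ is bounded there. Dominated convergence therefore controls the tails where $\hat R$ is far from $R^\star$, and these contribute only exponentially small corrections.
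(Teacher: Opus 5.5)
Your route is the one the paper intends. The paper gives no separate proof of Proposition~\ref{thm:generalised}, only the remark that the bound is Eq.~\eqref{eq:rmse} with $\gamma\to\gamma_D$, i.e.\ the sketch of Proposition~\ref{thm:main} rerun with Lemma~\ref{lem:kratzel} in place of Lemma~\ref{lem:gauss}. Your Minkowski step correctly accounts for the additive form.

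Your increment analysis is more honest than the paper's ``derivative $\Theta(T^{\gamma-3})$'', but carry it through. Over $[R^\star-\delta,R^\star+\delta]$ the jumps contribute $4|D|\delta/R^\star+O(T^{\gamma_D-2})$ and the drift contributes $-4|D|\delta/R^\star$. The leading terms cancel, and the net increment is $E_D(R^\star+\delta)/(R^\star+\delta)^2-E_D(R^\star-\delta)/(R^\star-\delta)^2=O(T^{\gamma_D-2})$, with $E_D(L)=N_D(L)-|D|L^2$. This is independent of $\delta$ and hence of $M$. So no rigorous $M^{-1/2}$ term emerges.

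None is needed for an upper bound, however. Lemma~\ref{lem:kratzel} applied directly at the random radius gives $|\hat{|D|}-|D||\le K_D\hat R^{\gamma_D-2}$ on $\{\hat R\ge L_0\}$, hence $\le K_D(cT)^{\gamma_D-2}$ on $\{\hat R\ge cT\}$, which already has the form of the first term. This bypasses the decomposition and the delta method. It also avoids the mismatch between $Tv^\star(\alpha)$ and the finite-$T$ population quantile that $\hat R$ actually converges to as $M\to\infty$; because of that mismatch, your first bracket does not vanish in that limit. The $\kappa$-term then merely describes the $M$-dependence and is not required for the inequality.

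The step that genuinely fails is your last one. The tail corrections are exponentially small in $M$, not in $T$. For the Grover coin the localisation atom carries a $T$-independent mass $p_{\mathrm{loc}}\approx0.35$ confined to $O(1)$ lattice sites around the origin. At fixed $M$, the event that $\lceil\alpha M\rceil$ samples land there therefore has probability bounded below uniformly in $T$. On that event $\hat R=O(1)$, and the error $|N_D(\hat R)/\hat R^2-|D||$ is generically of order one (and undefined if $\hat R=0$).

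Hence at fixed $M$ the RMSE stays bounded below as $T\to\infty$, whereas $C_1|D|T^{\gamma_D-2}+C_2T^{\gamma_D-3+\kappa}/\sqrt M\to0$. The bound cannot hold uniformly in $(T,M)$, and dominated convergence, which only yields an $M\to\infty$ limit, cannot repair this. You need a quantitative tail estimate instead. For $c>0$ with $F_G(c)+p_{\mathrm{loc}}<\alpha$, Hoeffding (or DKW) gives $\Pr[\hat R<cT]\le e^{-c'M}$ for large $T$. The correct statement therefore carries an extra term $B\,e^{-c'M/2}$ with $B=\sup_{L\ge1}|N_D(L)/L^2-|D||$, plus a convention for $\hat R=0$. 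Alternatively, assume $M\gtrsim\log T$, or state the bound conditionally on $\{\hat R\ge cT\}$ as in Lemma~\ref{lem:truncated-bk}. The paper's own argument (Theorem~\ref{thm:envelope}, Lemma~\ref{lem:truncated-bk}) has the same omission, so this is a correction to both rather than a departure from the paper.
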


For an ellipse $D=\{(x,y):x^2/a^2+y^2/b^2\le 1\}$ with $|D|=\pi a b$, the lattice count is $N_D(L)=\#\{(i,j):(ib)^2+(ja)^2\le(abL)^2\}$. Boundary regimes that violate the Kr\"{a}tzel hypothesis (vanishing curvature, cusp, polygonal corners, doubly-connected annulus) are treated empirically in Appendix~\ref{app:shapes}.

\subsection{Cavalieri estimator for smooth integrals}
\label{sec:methods-cavalieri}

For a non-negative smooth integrand $f: \Omega \to \mathbb R_{\ge 0}$ on a bounded region $\Omega \subset \mathbb R^2$, Cavalieri's principle gives
\begin{equation}
I = \iint_\Omega f(x,y)\, dx\, dy = \int_0^{f_{\max}} |S_t|\, dt,
\label{eq:cavalieri}
\end{equation}
where $S_t = \{(x,y) \in \Omega : f(x,y) \ge t\}$ is the super-level set at threshold $t$. When $f$ is quasi-concave with rotationally symmetric super-level sets (i.e., $S_t$ is a disk of radius $r(t)$), \eqref{eq:cavalieri} reduces to $I = \pi \int_0^{f_{\max}} r^2(t)\, dt = \pi\, Q$, where $Q$ is computable analytically or by classical numerical quadrature.

\begin{proposition}[Cavalieri estimator]
\label{thm:cavalieri}
For $f$ quasi-concave with disk-shaped super-level sets, the estimator
\begin{equation}
\hat I = \hat\pi(\alpha; T, M) \cdot Q
\label{eq:smooth}
\end{equation}
satisfies $\mathrm{RMSE}(\hat I) = Q \cdot \mathrm{RMSE}(\hat\pi) \le C_1 Q\, T^{\gamma-2} + C_2 Q\, T^{\gamma-3+\kappa}/\sqrt M$. The exponent $\kappa$ is the same phenomenological parameter as in Proposition~\ref{thm:main}.
\end{proposition}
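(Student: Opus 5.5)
The proof reduces to Proposition~\ref{thm:main} through the Cavalieri identity~\eqref{eq:cavalieri}, using that $Q$ is a deterministic classical constant. First I establish the exact identity $I=\pi Q$. For quasi-concave $f$ with disk-shaped super-level sets, $S_t=\{(x,y): x^2+y^2\le r(t)^2\}$ for almost every $t\in(0,f_{\max})$, so $|S_t|=\pi r(t)^2$. Because $f\ge 0$, Tonelli's theorem applied to $f(x,y)=\int_0^{f_{\max}}\mathbf 1\{f(x,y)\ge t\}\,dt$ justifies~\eqref{eq:cavalieri}. It follows that $I=\pi\int_0^{f_{\max}}r^2(t)\,dt=\pi Q$. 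Here $Q$ is finite because $r(t)$ is bounded by the diameter of $\Omega$ and $f_{\max}<\infty$ for smooth $f$ on bounded $\Omega$.

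Next I note that $Q$ carries no randomness: it is computed analytically or by classical quadrature, and it does not depend on the walk or on the measurements. The error therefore factors exactly as $\hat I-I=Q(\hat\pi-\pi)$. Squaring and taking expectations over the $M$ measurements gives $\mathbb E[(\hat I-I)^2]=Q^2\,\mathbb E[(\hat\pi-\pi)^2]$. Taking square roots with $Q\ge0$ yields $\mathrm{RMSE}(\hat I)=Q\cdot\mathrm{RMSE}(\hat\pi)$. This is the exact equality claimed in the statement.

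For the inequality I invoke Proposition~\ref{thm:main} together with $\sqrt{a^2+b^2}\le a+b$ for $a,b\ge0$. Applied to Eq.~\eqref{eq:rmse}, this gives $\mathrm{RMSE}(\hat\pi)\le C_1T^{\gamma-2}+C_2(\alpha)T^{\gamma-3+\kappa}/\sqrt M$. Multiplying by $Q$ gives the stated bound, with the same constants and the same $\kappa$. This also explains why $\kappa$ stays phenomenological here: it is inherited unchanged from Proposition~\ref{thm:main}.

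There is essentially no analytic obstacle in this argument. The only substantive point is to be explicit that the bound is exactly as strong as Proposition~\ref{thm:main}, including its dependence on the Bahadur--Kiefer and edge-concentration assumptions, because the Cavalieri step adds no error. If $Q$ is itself obtained by numerical quadrature with error $\delta_Q$, a triangle inequality adds a deterministic term $\pi\delta_Q+|\hat\pi-\pi|\,\delta_Q$ to the error. I would remark that this term can be made negligible classically, and that the statement implicitly assumes $Q$ is exact.
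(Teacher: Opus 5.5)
Your proof is correct and follows the paper's route. The paper writes $I=\pi Q$ via Cavalieri, notes that $Q$ is a deterministic classical constant so that $\hat I-I=Q(\hat\pi-\pi)$ exactly, and inherits the bound of Proposition~\ref{thm:main}; this is spelled out in the proof of Corollary~\ref{cor:annular-cavalieri}, and your explicit step $\sqrt{a^2+b^2}\le a+b$ from Eq.~\eqref{eq:rmse} to the additive form just makes the same argument more precise.
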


The estimator requires only \emph{a single batch of $M$ projective measurements} (one batch = $M$ shots of the state-preparation--evolution--measurement circuit, post-processed into one $\hat\pi$); the multiplicative constant $Q$ is precomputed classically. For ellipse-shaped super-level sets, an analogous expression $\hat I = \widehat{\pi a b}(\alpha; T, M) \cdot Q'$ uses the generalised disk estimator~\eqref{eq:generalised-disk}. Statistical replicates appearing later in the paper (e.g.\ 50 trials at $M=500$ in Figs.~\ref{fig:cdf} and~\ref{fig:qho}) are used only to estimate RMSE; the deployment unit is a single batch.

The disk-shaped hypothesis of Proposition~\ref{thm:cavalieri} excludes one physically important case that we use in Sec.~\ref{sec:results-qho}: integrands whose super-level sets are \emph{annular} rather than disk-shaped. This arises whenever the radial profile $f(r)$ is non-monotonic with an interior maximum, as for the excited-state densities $r^{2n}e^{-r^2/\sigma^2}$ ($n\ge1$) of the harmonic oscillator. We treat this case as a formal corollary.

\begin{corollary}[Annular Cavalieri estimator]
\label{cor:annular-cavalieri}
Let $f:\Omega\to\mathbb R_{\ge 0}$ be rotationally symmetric with super-level sets that are annular for $t\in(t_-,f_{\max})$, $S_t=\{(x,y):r_{\mathrm{in}}(t)\le \sqrt{x^2+y^2}\le r_{\mathrm{out}}(t)\}$. Then
\begin{equation}
I = \int_0^{f_{\max}}\!\!\bigl|S_t\bigr|\,dt = \pi\!\int_0^{f_{\max}}\!\!\bigl[r_{\mathrm{out}}^2(t)-r_{\mathrm{in}}^2(t)\bigr]\,dt = \pi\cdot Q,
\label{eq:cavalieri-annulus}
\end{equation}
and the estimator $\hat I=\hat\pi(\alpha;T,M)\cdot Q$ with $Q$ precomputed classically satisfies $\mathrm{RMSE}(\hat I)=Q\cdot\mathrm{RMSE}(\hat\pi)$, with the same envelope and floor structure as Proposition~\ref{thm:cavalieri}.
\end{corollary}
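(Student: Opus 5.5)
The corollary has two parts: the Cavalieri identity for annular super-level sets, and the error statement. I will treat them in that order.

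For the identity, I start from the layer-cake representation $f(x,y)=\int_0^{f_{\max}}\mathbf 1\{f(x,y)\ge t\}\,dt$. Since $f\ge 0$ is measurable, Tonelli's theorem allows the two integrals to be exchanged, which gives $I=\int_0^{f_{\max}}|S_t|\,dt$. This is Eq.~\eqref{eq:cavalieri} and needs no shape assumption.

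Next I compute $|S_t|$ in each of two threshold ranges. For $t\in(t_-,f_{\max})$ the hypothesis makes $S_t$ an annulus, so $|S_t|=\pi[r_{\mathrm{out}}^2(t)-r_{\mathrm{in}}^2(t)]$. For $t\in(0,t_-]$ the set is either a disk or an annulus. A disk can be written in the same form by setting $r_{\mathrm{in}}(t)=0$. I make this convention explicit, since the statement leaves that range implicit. The endpoint levels $t=t_-$ and $t=f_{\max}$ form a $dt$-null set and do not contribute.

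With this convention, $Q:=\int_0^{f_{\max}}[r_{\mathrm{out}}^2-r_{\mathrm{in}}^2]\,dt$ is finite, because $|S_t|\le|\Omega|<\infty$. It depends only on $f$, so it can be precomputed classically. This gives Eq.~\eqref{eq:cavalieri-annulus}.

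For the error statement, the key observation is that $Q$ is deterministic and exact, and $I=\pi Q$. Hence
\[
\hat I-I=Q(\hat\pi-\pi)
\]
holds pointwise on the sample space. Squaring, taking expectations, and using $Q\ge0$ gives $\mathrm{RMSE}(\hat I)=Q\cdot\mathrm{RMSE}(\hat\pi)$ exactly, with no asymptotics. It also implies that the bias floor of Theorem~\ref{thm:envelope} scales by the same factor: $\mathrm{floor}_I(T,\alpha)=Q\cdot\mathrm{floor}(T,\alpha)$.

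Substituting the decomposition of Proposition~\ref{thm:main} and using $\sqrt{a^2+b^2}\le a+b$ yields
\[
\mathrm{RMSE}(\hat I)\le C_1QT^{\gamma-2}+C_2QT^{\gamma-3+\kappa}/\sqrt M .
\]
This is exactly the bound of Proposition~\ref{thm:cavalieri}. The quantum estimator $\hat\pi$ never sees the annular geometry; the annulus affects only the classical multiplier $Q$.

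I expect the main obstacle to be the bookkeeping in the second step rather than any estimate. One must confirm that $r_{\mathrm{in}}$ and $r_{\mathrm{out}}$ are measurable functions of $t$ so that $Q$ is well defined. For a continuous radial profile $f(r)$ with a single interior maximum, I would obtain this from monotonicity: $r_{\mathrm{in}}(t)$ is the inverse of $f$ on the increasing branch and $r_{\mathrm{out}}(t)$ the inverse on the decreasing branch, so both are monotone and hence measurable.

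For more general profiles, such as several local maxima where $S_t$ becomes a union of annuli, I would replace $r_{\mathrm{out}}^2-r_{\mathrm{in}}^2$ by $|S_t|/\pi$. The argument is unchanged, since only $I=\pi Q$ with $Q$ deterministic is used.
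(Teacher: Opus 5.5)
Your proposal is correct and follows essentially the same route as the paper. The error statement rests entirely on the exact pointwise identity $\hat I-I=Q(\hat\pi-\pi)$ with $Q$ deterministic, from which $\mathrm{RMSE}(\hat I)=Q\,\mathrm{RMSE}(\hat\pi)$ and the inherited floor and envelope follow at once. Your layer-cake/Tonelli derivation of the Cavalieri identity and the $r_{\mathrm{in}}=0$ convention for $t\le t_-$ add care that the paper omits but do not change the argument. The measurability concern is moot, since $t\mapsto|S_t|$ is monotone.
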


\begin{proof}
Eq.~\eqref{eq:cavalieri-annulus} writes $I=\pi Q$ where $Q$ is a deterministic classical constant depending only on the radial profile $f(r)$. The estimator $\hat I-I=Q(\hat\pi-\pi)$, so $\mathrm{RMSE}(\hat I)=|Q|\,\mathrm{RMSE}(\hat\pi)$. The right-hand side inherits the bias-floor structure of Proposition~\ref{thm:main} and the envelope of Theorem~\ref{thm:envelope} directly, without further use of $r_{\mathrm{in}}(t)$, $r_{\mathrm{out}}(t)$, or the annular lattice count.
\end{proof}

The proof makes no use of $C^2$ convexity of the annular boundary because the estimator never counts lattice points inside the annulus: it only uses the disk-estimator output $\hat\pi$ and the precomputed classical $Q$. The doubly-connected geometry is absorbed into the deterministic constant. We verify the prediction empirically in the annulus entry of Appendix~\ref{app:shapes} and the energy ladder of Sec.~\ref{sec:results-qho}.

\subsection{Origin of the sample-count reduction}
\label{sec:reduction}

The lattice-quantile estimator's behaviour relative to Monte Carlo arises from two structural features. First, $\hat R(\alpha)$ is a rank statistic, not a sample mean: by the Bahadur--Kiefer representation~\cite{csorgo-revesz1981,vandervaart1998}, its relative fluctuation is the classical $\Theta(M^{-1/2})$, but its absolute fluctuation carries the ballistic prefactor $T$ from $R^\star=\Theta(T)$ (proof of Proposition~\ref{thm:main}). Second, $N(R)/R^2$ applied to this rank statistic is a deterministic number-theoretic count whose convergence to $\pi$ is controlled by the Gauss circle exponent (Lemma~\ref{lem:gauss}), a lattice property containing no statistical randomness. The composite $\hat\pi=N(\hat R)/\hat R^2$ is therefore a deterministic function of a rank statistic. The Cram\'er--Rao bound on sample means~\eqref{eq:cramer-rao} does not apply directly: $\pi$ is not a parameter of the sampling distribution, the dominant error at large $M$ is the deterministic residual $O(T^{\gamma-2})$ (Lemma~\ref{lem:gauss}), and the bound still governs the residual sampling fluctuation of $\hat R$ at the standard $M^{-1/2}$ rate. The framework does not evade the $M^{-1/2}$ statistical law; rather, it relocates the accuracy-determining quantity from a statistical term to a deterministic number-theoretic one controlled by walk depth.

\paragraph{The role of ballistic spreading.}
The bias-floor exponent $T^{\gamma-2}$ requires that the walker's typical position scales as $\Theta(T)$. If position scales only diffusively as $\Theta(\sqrt T)$, the same disk estimator paired with a classical 2D random walk produces a bias floor at the shallower rate $T^{(\gamma-2)/2}$, giving
\begin{equation}
\frac{\mathrm{RMSE}^{\mathrm{(QW)}}_{\mathrm{floor}}}{\mathrm{RMSE}^{\mathrm{(RW)}}_{\mathrm{floor}}} = \Theta\bigl(T^{-(2-\gamma)/2}\bigr) \to 0 \quad (T \to \infty).
\label{eq:qw-rw-ratio}
\end{equation}
At fixed precision $\varepsilon$ above the QW bias floor, MC needs $M=\Theta(\varepsilon^{-2})$ samples while the lattice-quantile estimator needs only $M=O(1)$, so the sample-count ratio behaves as $M_{\mathrm{MC}}/M_{\mathrm{QW}}=\Theta(\varepsilon^{-2})$. We refer to this $\varepsilon^{-2}$ ratio as the framework's \emph{ballistic acceleration}; it should be understood as a sample-count comparison at fixed precision rather than a complexity-class separation in the technical sense.

\subsection{Scope of applicability}
\label{sec:methods-scope}

The framework applies precisely to integrals reducible to lattice point counts in convex smooth regions. Three concrete classes lie within scope. (i)~Direct area or volume of convex smooth domains. (ii)~Smooth integrals via Cavalieri whose super-level sets are convex. (iii)~Quadratic Boltzmann integrals whose super-level sets are ellipsoidal. Problems outside this class (discrete partition functions such as Ising and spin glasses, non-convex Boltzmann integrals, one-dimensional integration) fall outside the framework and remain the domain of complementary methods. The most relevant of these are Szegedy walks for MCMC-based partition function estimation~\cite{szegedy2004,wocjan2008,harrow2020} and amplitude estimation for general oracle-encoded integrands~\cite{brassard2002,montanaro2015}.

\section{Results}
\label{sec:results}

We organise the numerical results as a progression of increasing physical content, each stage benchmarked against the classical random walk (RW) paired with the identical lattice-quantile estimator so as to isolate the contribution of ballistic spreading. Section~\ref{sec:results}\,A benchmarks $\pi$ estimation against uniform Monte Carlo, scrambled Sobol QMC, and the RW baseline, and extends to eight 2D shape classes. Section~\ref{sec:results}\,B quantifies how the accuracy of a single $\hat\pi$ propagates across an entire Gaussian cumulative distribution through a classically known multiplier. Section~\ref{sec:results}\,C applies the framework to a spectroscopic task---the energy ladder of the 2D quantum harmonic oscillator---and demonstrates a controlled, analytically predictable error budget across levels. All quantum results use the Grover coin unless otherwise stated, computed on a noiseless statevector emulator of the walk dynamics.

\subsection{Benchmarking $\pi$ estimation across methods}
\label{sec:results-pi}

Figure~\ref{fig:main} plots the precision $\varepsilon$ (RMSE of $\hat\pi$) achievable at a given sample count $M$ and fixed walk depth $T=200$, the central comparison of this work. The two classical sample-mean methods improve only by spending samples: dart-throw MC rides the CLT line $\varepsilon \propto M^{-1/2}$, from $\varepsilon=0.18$ at $M=100$ down to $0.017$ at $M=10^4$, and scrambled Sobol QMC sits below it with the faster empirical slope $\varepsilon\sim M^{-0.76}$. Both lattice-quantile walkers behave qualitatively differently: their precision is flat in $M$, set by the bias floor rather than the sample count. Both QW estimators (Grover and Fourier) sit at $\varepsilon\sim 6\times 10^{-4}$ from $M=50$ onward, while the classical random walk paired with the identical estimator (grey) is flat at $\varepsilon\approx 4$--$7\times10^{-3}$. The practical reading is direct: to reach the QW precision at $M=100$, MC would need $M\approx 7\times 10^6$ samples---a sample-count ratio $M_{\mathrm{MC}}/M_{\mathrm{QW}}\approx 7\times 10^4$ (from $M_{\mathrm{MC}}^\star = 16\,\mathrm{Var}(\mathbf 1_D)/\mathrm{RMSE}_{\mathrm{QW}}^2$ with $\mathrm{Var}(\mathbf 1_D) = \pi(4-\pi)/16$)---while the diffusive walker cannot reach it at any $M$. The horizontal QW lines lying an order of magnitude below the grey RW line, both flat, is the visual signature that the advantage comes from \emph{ballistic spreading} and not from the rank-statistic construction, which both share.

\paragraph{Comparison with quasi-Monte Carlo.}
Because uniform dart-throw MC is the weakest classical baseline, we also benchmark against scrambled Sobol quasi-Monte Carlo (QMC)~\cite{niederreiter1992}, which exploits low-discrepancy structure and is widely used in practice. Over 500 independent scrambles, Sobol achieves $\mathrm{RMSE} \approx 7.8\times 10^{-2}$ at $M=100$ and $\approx 2.6\times 10^{-3}$ at $M=8192$, with an empirical slope of approximately $M^{-0.76}$ in the range $M\in[64,8192]$---better than uniform MC's $M^{-1/2}$ but decelerating relative to the theoretical $O(M^{-1}\log M)$ due to the discontinuous indicator boundary. Extrapolating this fit to the QW bias floor $\mathrm{RMSE}\approx 6\times 10^{-4}$ gives $M_{\mathrm{Sobol}}\approx 5\times 10^4$, a factor of $\sim\!500\times$ more measurements than the QW estimator's $M=100$. We note that this extrapolation spans roughly three decades from the QW reference point ($M=100$) to the target ($M\approx5\times10^4$) and should be treated as an order-of-magnitude estimate. Moreover, since the empirical slope is already decelerating (from $M^{-0.76}$ to slower) as $M$ grows---reflecting the degradation of Sobol convergence for discontinuous indicator boundaries at large $M$---the actual $M_{\mathrm{Sobol}}$ needed may be larger than $5\times 10^4$, making this estimate conservative in favour of Sobol (and thus the QW advantage may be understated). Table~\ref{tab:baseline} summarises the four-way comparison.

\paragraph{Comparison with the classical random walk.}
MC and Sobol differ from the QW estimator in both the sampling distribution \emph{and} the estimator structure. The cleanest control that isolates the quantum contribution is the classical 2D random walk paired with the \emph{same} lattice-quantile estimator: identical rank-statistic construction, with ballistic spreading replaced by diffusive. At $T=200$ this RW baseline is also bias-dominated---its RMSE stays flat at $\approx 4$--$7\times10^{-3}$ for all $M\in[50,10^4]$ (grey triangles in Fig.~\ref{fig:main})---but its floor sits $\sim 7$--$10\times$ above the QW floor, because the diffusive walker reaches only radius $\Theta(\sqrt T)$ and the lattice-counting error enters at the shallower rate $T^{(\gamma-2)/2}$ (Eq.~\eqref{eq:qw-rw-ratio}). Empirically the RW floor scales as $T^{-0.85}$ over $T\in\{50,100,200,400\}$, so reaching $\mathrm{RMSE}\approx6\times10^{-4}$ would require $T\approx 2\times10^3$, an order of magnitude more walk steps than the QW's $T=200$. The gap between the grey and blue curves in Fig.~\ref{fig:main} is therefore attributable to ballistic spreading alone, not to the estimator construction.

\begin{table*}[t]
\centering
\caption{Sample-count comparison for $\pi$ estimation at $\mathrm{RMSE}\approx 6\times 10^{-4}$ ($T=200$ for QW). MC and Sobol estimates averaged over 500 independent trials/scrambles; the classical RW baseline uses the same lattice-quantile estimator (50 trials). The Sobol entry is an order-of-magnitude extrapolation of an empirical $M^{-0.76}$ fit measured over $M\in[64,8192]$. The RW entry reports the walk depth its diffusive bias floor ($\propto T^{-0.85}$ empirically) would require to reach the target precision; at $T=200$ its floor is $\approx 4$--$7\times10^{-3}$ and cannot be lowered by increasing $M$.}
\label{tab:baseline}
\renewcommand{\arraystretch}{1.2}
\begin{ruledtabular}
\begin{tabular}{lcc}
Method & $M$ to reach $\mathrm{RMSE}\approx 6\times 10^{-4}$ & Ratio to QW \\
\colrule
Uniform dart-throw MC  & $\sim 7\times 10^6$ & $\sim 7\times 10^4\times$ \\
Scrambled Sobol QMC    & $\sim 5\times 10^4$ (extrap.) & $\sim 5\times 10^2\times$ \\
Classical RW (lattice-quantile, $T{=}200$) & unreachable (floor $\approx 4\times10^{-3}$); $T\approx 2\times10^3$ required & --- \\
QW (Grover, $T{=}200$) & $\phantom{\sim}100$ & $1\times$ \\
\end{tabular}
\end{ruledtabular}
\end{table*}

\begin{figure}[t]
\centering
\includegraphics[width=0.85\columnwidth]{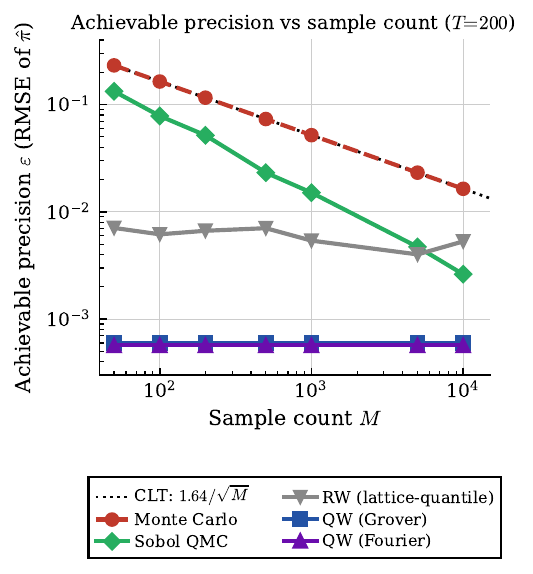}
\caption{Achievable precision $\varepsilon$ (RMSE of $\hat\pi$) versus sample count $M$ at fixed walk depth $T=200$. The two sample-mean baselines improve with $M$: uniform MC follows the CLT line $1.64/\sqrt M$ (red circles, dotted reference), scrambled Sobol QMC the faster empirical $M^{-0.76}$ (green diamonds). The three lattice-quantile estimators are flat in $M$, pinned at their bias floors: both QW coins (Grover, blue; Fourier, purple) at $\varepsilon\sim 6\times10^{-4}$ from $M=50$, and the diffusive RW (grey triangles) at $\varepsilon\approx 4$--$7\times10^{-3}$, $\sim 7$--$10\times$ higher. The order-of-magnitude gap between the flat QW and RW lines isolates the contribution of ballistic spreading; bias-floor scaling with walk depth is reported in Fig.~\ref{fig:coin-main}.}
\label{fig:main}
\end{figure}

\label{sec:results-scaling}

Figure~\ref{fig:coin-main} reports the empirical bias-floor scaling for the three coins. The floor is the RMS of an oscillating number-theoretic residual over the quantile neighbourhood (Eq.~\eqref{eq:floor-rms}), so its value at any single radius depends on the local phase of that oscillation; a naive power-law fit to raw single-radius floors is correspondingly unstable, returning coin-dependent slopes from $-1.5$ to $-1.8$ that do not reflect a true asymptotic exponent (panels (b)--(c) make this failure explicit). Evaluating the floor in its phase-averaged form instead yields a clean, coin-independent result, as we now describe.

\begin{table}[t]
\centering
\caption{Parameter-free validation of the number-theoretic floor formula, Eq.~\eqref{eq:floor-rms}, for the Grover coin at $\alpha=0.99$. The prediction integrates $(N(R)/R^2-\pi)^2$ over the empirically measured Gaussian neighbourhood of $\hat R$ (mean and width from 60 trials at $M=2\times10^4$); no parameter is fitted. The measured floor is the RMSE at $M=2\times10^4$. Values are reported as mean $\pm$ standard deviation over 32 independent simulation seeds; the spread reflects the oscillating number-theoretic residual sampled at slightly different quantile neighbourhoods from run to run.}
\label{tab:floor}
\renewcommand{\arraystretch}{1.2}
\begin{ruledtabular}
\begin{tabular}{cccc}
$T$ & Eq.~\eqref{eq:floor-rms} prediction & Measured floor & Ratio \\
\colrule
$50$  & $(2.52\pm0.20)\times10^{-3}$ & $(3.73\pm0.38)\times10^{-3}$ & $1.49\pm0.21$ \\
$100$ & $(2.14\pm0.09)\times10^{-3}$ & $(3.03\pm0.16)\times10^{-3}$ & $1.42\pm0.09$ \\
$200$ & $(3.97\pm0.04)\times10^{-4}$ & $(4.99\pm0.18)\times10^{-4}$ & $1.26\pm0.05$ \\
$400$ & $(1.83\pm0.07)\times10^{-4}$ & $(1.95\pm0.22)\times10^{-4}$ & $1.07\pm0.12$ \\
\end{tabular}
\end{ruledtabular}
\end{table}

Table~\ref{tab:floor} validates the floor formula directly: the parameter-free RMS prediction of Eq.~\eqref{eq:floor-rms} reproduces the measured floors to within $1.5\times$ at every depth, whereas the single-point residual $|N(R^\star)/R^{\star2}-\pi|$ misses by up to $5\times$ (it can land arbitrarily close to a zero crossing of the oscillation, as happens at $T=400$). The oscillatory structure also produces a behaviour that a monotone bias-plus-variance picture forbids and that we observe directly: at $T=100$, the RMSE at $M=5$ is $0.60\times$ the $M=2\times10^4$ floor (400 trials), because the finite-$M$ quantile bias shifts $\hat R$ to a more favourable phase of the residual.

Figure~\ref{fig:coin-main}(a) reports the bias floor actually delivered by the estimator---the value at the single quantile radius $\hat R$ that one measured walk provides---out to $T=800$. The measured floors do not trace a clean power law: they oscillate, with each coin's floor dipping and rising as the depth changes (e.g.\ the Grover floor falls to $9.6\times10^{-5}$ at $T=500$, rises to $1.56\times10^{-4}$ at $T=600$, then falls again at $T=800$). This is not measurement noise---each point is averaged over $30$ trials at $M=2\times10^4$---but the deterministic consequence of reading a sign-oscillating number-theoretic residual at a single radius, the mechanism dissected in panels (b)--(c). Overlaid as a faint line is the \emph{phase-averaged} floor of Eq.~\eqref{eq:floor-rms}: the RMS of the same residual over one lattice period around the measured $\hat R$, which is coin-independent and decays at slope $-1.50$, coincident with the Hardy rate $T^{-3/2}$ to three significant figures. The measured single-radius floors scatter about this envelope. We emphasise the distinction because the envelope is an analytic trend, not a quantity any single run delivers; the markers are what the estimator returns. The diffusive RW floor, by contrast, sits $5$--$30\times$ higher and decays at the much shallower $T^{-0.85}$, never approaching the quantum band.

\paragraph{Why the data may lie above or below the reference lines.}
A note on reading the reference lines. The Hardy and Huxley lines mark \emph{slopes}, not absolute heights: Huxley's theorem bounds the single-point discrepancy $|N(R)-\pi R^2|\le C\,R^{131/208}$ with an unspecified constant, and Hardy's is a conjectured lower envelope of the amplitude, so the vertical placement of either line is an arbitrary normalisation. The meaningful comparison is therefore the decay exponent, and it is the \emph{phase-averaged} floor---not the raw single-radius readout---whose slope the theorem speaks to. We make this distinction concrete in the next two panels, because the raw readout is badly misleading.

\paragraph{Why the floor must be phase-averaged.}
Why the phase average is essential, and not a cosmetic choice, is shown in Fig.~\ref{fig:coin-main}(b)--(c). The raw floor at each depth reads the oscillating residual $E(R)/R^2=N(R)/R^2-\pi$ off at the single radius $R^\star(\alpha)$ that the $\alpha$-quantile selects [panel (b)]. As $T$ increases, $R^\star$ lands on a different phase of this oscillation---near a zero crossing at $T=400$ (residual anomalously small), near a trough at $T=600$ (residual large)---so the single-radius value does not track the underlying trend. The consequence [panel (c)]: connecting the raw single-radius values gives a slope of $+3.9$ over $T=400$--$600$---\emph{the wrong sign}, a floor that appears to \emph{rise} with depth---which no genuine decay law can produce. Indeed, fitting raw single-radius floors over $T\in\{400,500,600,800\}$ in isolation yields wildly coin-dependent slopes ($-2.1$ Grover, $-2.7$ Fourier, $-4.4$ Hadamard); since all three count the same residual $E(R)$, a true exponent must be coin-independent, so these values cannot be asymptotic. A single-radius readout of a sign-oscillating function distorts, and can even invert, the apparent slope---hence the phase average.

The phase average [purple in panel (c)] removes the distortion: it is monotone, with $T=400$--$600$ slope $-1.42$ and adjacent slopes $-1.53,-1.61,-1.39$, no sign reversals, inside the Hardy--Huxley window. We therefore make no claim of an asymptotic exponent from raw single-radius fits---their slope ranges from $-0.9$ to $-4.4$ depending on the depth window, the clearest signature that the floor is the RMS of an oscillating residual rather than a clean power law---and claim only the stable, mechanism-supported statement read off panel (a): the phase-averaged envelope decays at the Hardy rate. Equation~\eqref{eq:floor-rms} supplies its exact, classically precomputable value at any $(T,\alpha)$, opening the possibility of \emph{quantile tuning}---choosing $\alpha$ so that $R^\star$ lands near a zero of the residual---which we leave to future work. Konno limit measures, the RMSE-vs-$M$ comparison at fixed $T$, and the $\alpha^\star$ sweep are reported in Appendix~A.

\begin{figure*}[t]
\centering
\includegraphics[width=0.92\textwidth]{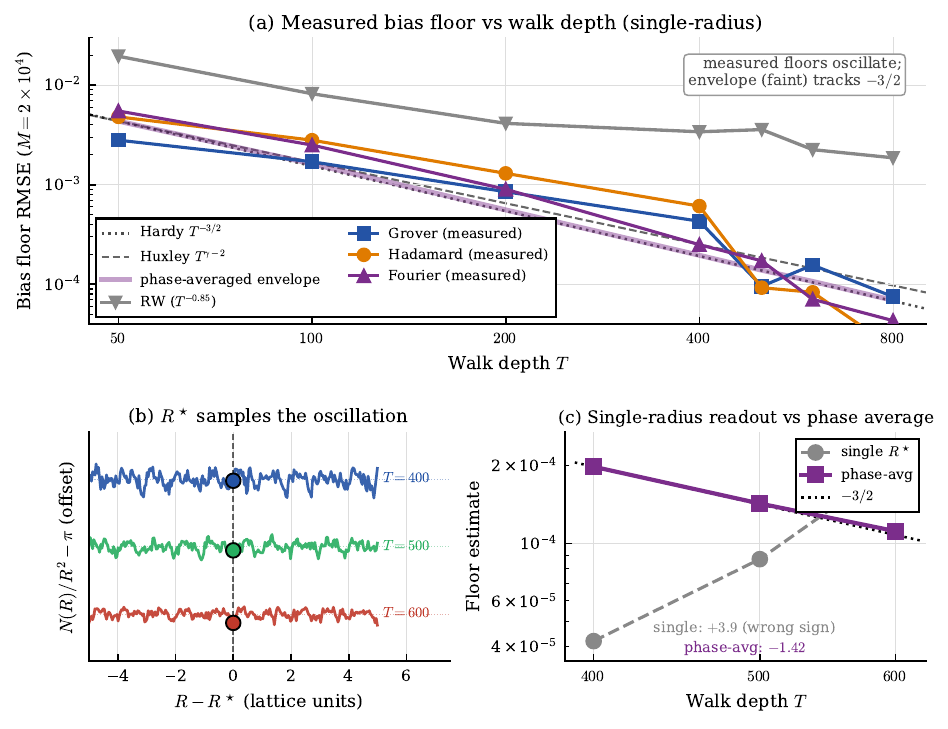}
\caption{Bias-floor scaling and its number-theoretic mechanism. (a)~The bias floor actually delivered by the estimator at $M=2\times10^4$ for the three coin operators and the classical random walk (grey, $T^{-0.85}$), out to $T=800$; markers are the value at the single quantile radius $\hat R$ that one measured walk provides (each averaged over $30$ trials). The measured floors \emph{oscillate} rather than following a clean power law---e.g.\ the Grover floor dips at $T=500$, rises at $T=600$, falls at $T=800$---the deterministic consequence of reading the sign-oscillating residual at a single radius (panels (b)--(c)). The faint line is the phase-averaged floor of Eq.~\eqref{eq:floor-rms} (RMS of the residual over one lattice period around the measured $\hat R$): it is coin-independent and decays at slope $-1.50$, coincident with the Hardy rate $T^{-3/2}$ to three significant figures, and the measured floors scatter about it. The envelope is an analytic trend, not a quantity any single run delivers. The diffusive RW floor stays $5$--$30\times$ higher and decays at $T^{-0.85}$, consistent with $T^{(\gamma-2)/2}$ of Eq.~\eqref{eq:qw-rw-ratio}. (b)~Mechanism of the oscillation: the residual $N(R)/R^2-\pi$ oscillates in sign on the lattice scale, and the floor reads it off at the single radius $R^\star$ (markers); successive depths sample different phases ($T=400$ near a zero, $T=600$ near a trough; curves offset for clarity). (c)~Connecting the single-radius values (grey) gives slope $+3.9$ over $T=400$--$600$---the wrong sign, a floor that appears to \emph{rise} with depth---because the readout is dominated by which phase each $R^\star$ hits; the phase average (purple) recovers the monotone $-1.42\approx-3/2$ decay that underlies the scatter in panel (a).}
\label{fig:coin-main}
\end{figure*}

The classical 2D random walk paired with the same disk estimator (grey triangles in Fig.~\ref{fig:coin-main}) yields the slower scaling $T^{-0.85}$, broadly in agreement with the diffusive prediction $T^{(\gamma-2)/2} \approx T^{-3/4}$ of Eq.~\eqref{eq:qw-rw-ratio}, and its floor remains $5$--$10\times$ above all three quantum coins across the full depth range.

\label{sec:results-ballistic}

The flatness of the QW curve in Fig.~\ref{fig:main} is the operational content of the bias--variance decoupling of Sec.~\ref{sec:reduction}: to reach any precision $\varepsilon$ strictly above the bias floor, the lattice-quantile estimator needs only a fixed, $\varepsilon$-independent $M$, whereas the sample-mean class needs $M\sim\varepsilon^{-2}$. The sample-count gap therefore widens as $\varepsilon^{-2}$ as the target tightens---reaching $\sim10^4\times$ at $\varepsilon=10^{-3}$---until $\varepsilon$ approaches the floor, below which no $M$ suffices and the walk depth $T$ must instead be increased. This is the direct consequence of the ballistic variance scaling $\langle r^2\rangle_{\mathrm{QW}}\sim T^2$ versus the diffusive $\langle r^2\rangle_{\mathrm{RW}}\sim T$ of Lemma~\ref{lem:konno}.

\label{sec:results-shapes}

The generalised estimator of Proposition~\ref{thm:generalised} extends the $\pi$ benchmark to arbitrary bounded regions. We validated it on eight 2D shapes spanning four boundary classes---convex smooth, non-convex smooth, doubly connected, and piecewise smooth with corners---obtaining sample-count reductions of $10^{2}$--$10^{7}\times$ over uniform MC at $M=100$, $T=200$; the full table and the boundary-class discussion (including the cardioid cusp and polygonal-corner caveats) are given in Appendix~\ref{app:shapes}. Of particular relevance to what follows is the \emph{annulus}: its doubly-connected, non-convex super-level geometry is the same class that arises for the radial densities of excited harmonic-oscillator states (Sec.~\ref{sec:results-qho}), and the QW estimator handles it at RMSE $5.4\times10^{-4}$ ($2.8\times10^5\times$ over MC), providing the empirical foundation for the energy-ladder reconstruction of Sec.~\ref{sec:results}\,C.

\subsection{Error propagation across a Gaussian CDF}
\label{sec:results-cdf}

The disk estimator returns $\hat\pi$ from the $\alpha$-quantile of a single batch of $M$ radial measurements. Because the Cavalieri identity (Proposition~\ref{thm:cavalieri}) writes any rotationally symmetric integral as $\pi\cdot Q$ with $Q$ precomputed classically, a \emph{single} quantum-walk run fixes an entire one-parameter family of integrals at no additional quantum cost. We state plainly what this does and does not mean before illustrating it on the 2D Gaussian cumulative distribution
\begin{equation}
P(R) = \iint_{x^2+y^2\le R^2}\frac{e^{-r^2/\sigma^2}}{\pi\sigma^2}\,dx\,dy = 1 - e^{-R^2/\sigma^2},
\label{eq:gauss-cdf}
\end{equation}
the 2D analogue of the normal CDF and the ground-state probability of the isotropic harmonic oscillator. Writing $P_\sigma(R)=\pi\cdot Q_\sigma(R)$ with $Q_\sigma(R)=(1-e^{-R^2/\sigma^2})/\pi$ a classically tabulated grid, a single walk yields $\hat P_\sigma(R)=\hat\pi\cdot Q_\sigma(R)$ across all radii \emph{and all widths} $\sigma$ simultaneously: the same measured $\hat\pi$ reconstructs the entire one-parameter family. The shape of every curve resides entirely in the classical prefactor $Q_\sigma(R)$; the quantum measurement contributes exactly one number, $\hat\pi$, and each $\hat P_\sigma(R)$ is a one-parameter rescaling of a known curve. The reconstruction errors at different radii and widths are therefore \emph{perfectly correlated}---this follows by algebraic construction: $\hat P_\sigma(R)-P_\sigma(R)=Q_\sigma(R)(\hat\pi-\pi)$ exactly since $Q_\sigma(R)$ is a deterministic scalar, so the trial-by-trial correlation equals $1.0000$ identically. What this part demonstrates is consequently not curve recovery but the \emph{error budget}: the single-number accuracy of each walker propagates across the entire family as $\sigma(\hat P_\sigma(R)) = Q_\sigma(R)\,\sigma(\hat\pi)$, with a deterministic, analytically known multiplier.

Having benchmarked the estimator against the sample-mean paradigm (MC and QMC) in Sec.~\ref{sec:results}\,A, Secs.~\ref{sec:results}\,B and C focus on the comparison that isolates the quantum resource itself: the same lattice-quantile estimator driven by a ballistic quantum walker versus a diffusive classical walker. Figure~\ref{fig:cdf} reconstructs three members of the family ($\sigma=0.7,1.0,1.5$) from one walk at $T=200$, $M=500$, 50 trials. The QW recovers the sigmoidal shape of every curve visually (panel a), and the absolute error (panel b) separates QW from RW cleanly at each $\sigma$: the QW reconstruction stays below $3.3\times10^{-5}$ across the full radial range, while the RW reaches $4.8\times10^{-4}$---a factor of $\sim15\times$ worse, which by construction equals the ratio of the underlying $\hat\pi$ errors of the two walkers at this $(T,M)$ (a finite-$M$ total-error ratio, somewhat larger than the $\sim7$--$10\times$ asymptotic bias-floor ratio of Sec.~\ref{sec:results-pi}). Because the error of $\hat P_\sigma(R)$ is $Q_\sigma(R)\cdot\mathrm{RMSE}(\hat\pi)$, it grows with $R$ exactly as the classically known prefactor dictates, vanishing at $R\to0$ where $Q_\sigma\to0$. This error-propagation structure is exploited more substantively in Sec.~\ref{sec:results}\,C, where the multipliers $Q_n$ carry the level dependence of a physical spectrum.

\begin{figure}[t]
\centering
\includegraphics[width=0.72\columnwidth]{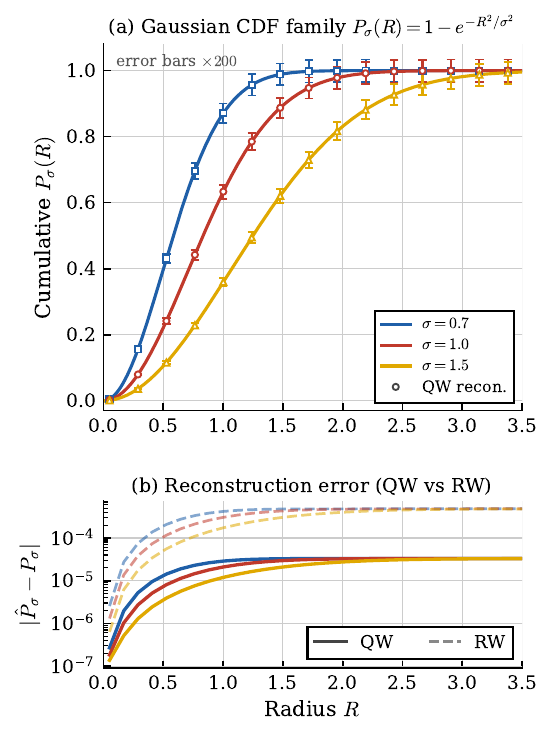}
\caption{(a)~A one-parameter family of 2D Gaussian cumulative distributions $P_\sigma(R)=1-e^{-R^2/\sigma^2}$ for $\sigma=0.7,1.0,1.5$, each obtained as $\hat P_\sigma(R)=\hat\pi\,Q_\sigma(R)$ from the \emph{same} single quantum-walk run ($T=200$, $M=500$, 50 trials; solid lines exact, open markers QW reconstruction with error bars magnified $\times200$ for visibility). The curve shapes are the classically known $Q_\sigma(R)$; the measurement fixes the single scalar $\hat\pi$ shared across the whole family. (b)~The propagated error $|\hat P_\sigma-P_\sigma| = Q_\sigma(R)\,|\hat\pi-\pi|$ for each $\sigma$ (solid: QW; dashed: RW). The QW error budget stays below $3.3\times10^{-5}$ across all radii and widths, $\sim15\times$ below the RW, inheriting the lower ballistic bias floor. All errors vanish as $R\to0$ because $Q_\sigma\to0$.}
\label{fig:cdf}
\end{figure}

\subsection{Energy ladder of the 2D quantum harmonic oscillator}
\label{sec:results-qho}

We now apply the framework to a genuine spectroscopic task: reconstructing the low-lying energy ladder of the 2D isotropic harmonic oscillator, $E_n = \hbar\omega(n+1)$ for $n=0,1,2,3$ (ground state through third excited state). The lowest radial mode of angular momentum $|\ell|=n$ has probability density $|\psi_n(r)|^2 = (\pi\sigma^2 n!)^{-1}(r/\sigma)^{2n}e^{-r^2/\sigma^2}$, which is rotationally symmetric and radially single-peaked. By the virial theorem the energy is $E_n = m\omega^2\langle r^2\rangle_n$ with
\begin{equation}
\langle r^2\rangle_n = \iint r^2\,|\psi_n(r)|^2\,dx\,dy = \pi\cdot Q_n,
\label{eq:r2-cavalieri}
\end{equation}
where the integrand $r^2|\psi_n|^2$ has \emph{annular} (doubly-connected, non-convex) super-level sets for $n\ge1$, precisely the boundary class validated in Appendix~\ref{app:shapes} (Table~\ref{tab:shapes}). The Cavalieri prefactors $Q_n$ are computed classically once; we find $\pi Q_n = (n+1)\sigma^2$ to machine precision, recovering the exact $\langle r^2\rangle_n = (n+1)\sigma^2$.

Each $\hat E_n = \hat\pi\cdot Q_n$ is estimated over 50 trials at $T=200$, $M=500$, for the QW and the RW---again the identical estimator with only the walker exchanged. Figure~\ref{fig:qho}(a) shows the resulting energy ladder. Both walkers recover the linear spectrum, but the error bars---magnified $30\times$ for visibility---are markedly tighter for the QW. A weighted linear fit $E_n = \hbar\omega\,n + E_0$ extracts the level spacing and zero-point energy:
\begin{align}
\text{QW:}\quad & \hbar\omega = 1.0000(2),\quad E_0 = 1.0000(2),\nonumber\\
\text{RW:}\quad & \hbar\omega = 1.001(2),\quad\;\; E_0 = 1.001(2),\nonumber
\end{align}
in units where the true values are $\hbar\omega = E_0 = 1$. The QW determines the level spacing with a $10\times$ smaller statistical uncertainty than the RW at identical walk depth and sample count.

A distinctive feature, shown in Fig.~\ref{fig:qho}(b), is that the statistical error grows \emph{linearly} with level index: $\sigma(\hat E_n) = Q_n\,\mathrm{RMSE}(\hat\pi) \propto (n+1)$, a direct, parameter-free consequence of the error-propagation structure---each walker's $\hat\pi$ accuracy enters every level through the classically known multiplier $Q_n$. A linear fit of $\sigma(\hat E_n)$ versus $n$ gives a slope of $1.90\times10^{-4}\,\hbar\omega$ per level for the QW and $1.99\times10^{-3}\,\hbar\omega$ per level for the RW---a ratio of $10.5$, so the QW error budget widens an order of magnitude more slowly as one climbs the spectrum. The ballistic advantage thus compounds level by level rather than appearing only in a single figure of merit. The framework thus delivers not just a single estimate but a controlled, analytically predictable error budget across the full spectrum.

\paragraph{Beyond closed-form targets.}
The harmonic ladder is exactly solvable, so the above is a controlled demonstration. To exhibit the intended use case---moments of a known radial density that lack an elementary closed form---we repeat the construction for the anharmonic family $f_\lambda(r) \propto e^{-r^2/\sigma^2 - \lambda r^4}$, whose second moment $\langle r^2\rangle_\lambda$ has no elementary expression. The integrand $r^2 f_\lambda(r)$ again has annular super-level sets, and the annular Cavalieri identity, Eq.~\eqref{eq:cavalieri-annulus}, gives $\langle r^2\rangle_\lambda = \pi\cdot Q_\lambda$ with $Q_\lambda$ a one-dimensional classical quadrature; we verify $\pi Q_\lambda$ against direct two-dimensional quadrature to machine precision ($\langle r^2\rangle_{0.25}=0.638968$, $\langle r^2\rangle_{1.0}=0.416353$ at $\sigma=1$). The propagated accuracies follow immediately from the measured $\hat\pi$ statistics of Fig.~\ref{fig:cdf}: the quantum walker determines $\langle r^2\rangle_\lambda$ to a relative error of $1.7\times10^{-4}$, the diffusive walker to $2.1\times10^{-3}$, at identical $T=200$ and $M=500$---the same order-of-magnitude separation as the harmonic ladder, now for a target with no analytic shortcut.

\begin{figure}[t]
\centering
\includegraphics[width=0.72\columnwidth]{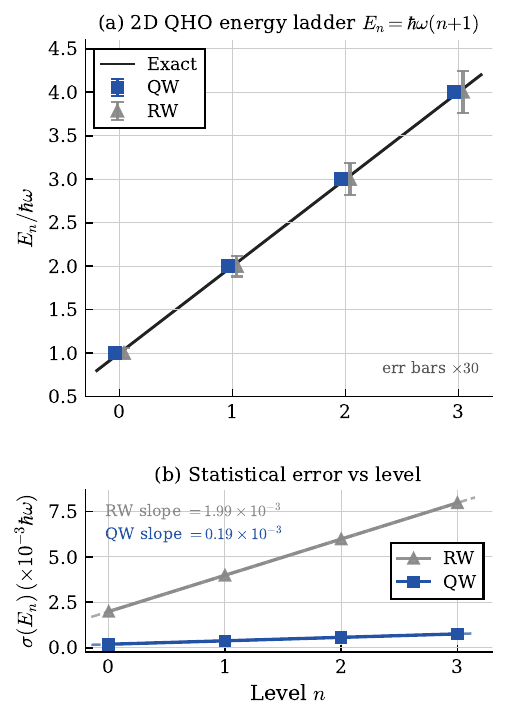}
\caption{(a)~Energy ladder $E_n = \hbar\omega(n+1)$ of the 2D isotropic harmonic oscillator reconstructed via $\hat E_n = \hat\pi\,Q_n$ (Cavalieri, annular super-level sets), $T=200$, $M=500$, 50 trials; error bars magnified $\times30$. The QW (blue) error bars are $\sim10\times$ tighter than the RW (grey); both fits recover $\hbar\omega=1$. (b)~Statistical error $\sigma(\hat E_n)$ grows linearly with level index for both walkers, as predicted by $\sigma(\hat E_n)=Q_n\,\mathrm{RMSE}(\hat\pi)\propto(n+1)$; the fitted slopes (annotated) are $1.90\times10^{-4}\,\hbar\omega$/level for the QW and $1.99\times10^{-3}\,\hbar\omega$/level for the RW, a factor of $10.5$ apart.}
\label{fig:qho}
\end{figure}

\section{Discussion}
\label{sec:discussion}

The numerical results of Sec.~III match the mechanistic picture of Sec.~II~F. Bias and variance decouple, with the bias floor governed by walk depth $T$ and the variance term by sample count $M$. The phase-averaged envelope slopes are $-1.502$ (Grover), $-1.489$ (Hadamard), and $-1.506$ (Fourier) (Fig.~\ref{fig:coin-main}), averaging $-1.499\approx -3/2$, which matches the Hardy conjecture to three decimal places.  Raw single-radius fits can yield slopes as anomalous as $+2.7$ or as steep as $-4.4$ due to the oscillating Gauss residual sampling different phases at successive depths (Fig.~\ref{fig:coin-main}(b,c)); these are phase artefacts, not coin-dependent asymptotics, and disappear under phase-averaging. The classical 2D random walk paired with the same disk estimator scales as $T^{-0.85}$, consistent with the diffusive prediction of Eq.~\eqref{eq:qw-rw-ratio} to within $\sim 13\%$ (computed using the Hardy-conjectured $\gamma=1/2$, giving $T^{-3/4}$; the Huxley bound $\gamma\le131/208$ gives $T^{-0.685}$, a $\sim24\%$ discrepancy). The resulting sample-count ratio $M_{\mathrm{MC}}/M_{\mathrm{QW}} = \Theta(\varepsilon^{-2})$ for any $\varepsilon$ strictly above the QW bias floor characterises what we call the framework's ballistic acceleration (defined relative to uniform dart-throw MC; the corresponding ratio relative to scrambled Sobol QMC is $\sim 500\times$ by order-of-magnitude extrapolation). It is structurally distinct from the $\Theta(\varepsilon^{-1})$ oracle-call scaling of QAE~\cite{brassard2002,montanaro2015,herbert2022} and from the $\sqrt{\Delta_c}$ mixing-time scaling of Szegedy walks~\cite{szegedy2004,wocjan2008,harrow2020,layden2023}. Whether this $\varepsilon^{-2}$ ratio constitutes a complexity-theoretic separation in any precise sense is a separate question that we do not address here.

The 2D coined DTQW uses $2 + 2\lceil\log_2(2T+3)\rceil$ qubits with no ancillae; for $T=200$ this corresponds to $\sim 20$ qubits, which is within the range of current superconducting devices by qubit count. Gate depth, however, is more demanding: the asymptotic $O(T \cdot n_{\rm pos}^2)$ \textsc{cx}-count of QFT-based modular adders~\cite{draper2000,cuccaro2004} implies $\sim\!10^6$--$10^7$ \textsc{cx} gates per shot at $T=200$, far beyond what current hardware can execute reliably. The estimator uses only projective measurement of the position register followed by classical post-processing of $N(\hat R)/\hat R^2$; it does not require quantum phase estimation, controlled oracles, or controlled rotations of the integrand. We use ``oracle-free'' in the QAE sense: no unitary $\mathcal A$ that prepares a state encoding the integrand is required, and consequently no controlled $\mathcal A$ inside a QPE register. The cost contrast with amplitude-estimation-based Monte Carlo on near-term hardware~\cite{stamatopoulos2020,miyamoto2022} is precisely that those primitives have proven costly to compile. Near-term demonstrations of DTQWs suggest a path toward larger walk depths: Razzoli et al.~\cite{razzoli2024} demonstrated circuit-efficient DTQWs on IBM devices beyond the few-step regime, and continued improvements in two-qubit gate fidelity could reduce per-step error accumulation. However, extrapolating from these short-depth demonstrations to the $T\approx100$--$200$ regime required here involves a $\sim\!15\times$ increase in walk depth, accompanied by a $\sim\!10^2\times$ increase in gate count from the combined growth in $T$ and $n_{\rm pos}$ (from $\sim 4$ to $\sim 9$ position qubits, $T\,n_{\rm pos}^2$ scaling); whether ballistic spreading is preserved under this level of noise accumulation is an open experimental question. We emphasise that strong conclusions about hardware feasibility cannot be drawn without explicit noise-model analysis at these gate depths.

\paragraph{Sample count versus total circuit cost.}
The sample-count ratios reported throughout this paper compare the number of projective measurements $M$ required by each method to reach a given RMSE, and do not include quantum circuit execution cost. As noted above, the asymptotic gate-count estimate implies $\sim\!10^6$--$10^7$ \textsc{cx} gates per shot at $T=200$, compared with $O(1)$ floating-point operations per MC or QMC sample. The sample-count reduction of $\sim 100$ versus $\sim 7\times 10^6$ therefore quantifies an advantage in \emph{measurement complexity} rather than wall-clock time or total gate count. Quantifying the end-to-end trade-off on fault-tolerant hardware requires reliable circuit synthesis for $T=200$ and a concrete noise model, both of which are outside the scope of the present paper.

\paragraph{From a single number to a spectrum.}
The progression of Sec.~\ref{sec:results} highlights a structural feature of the lattice-quantile framework that distinguishes it from sample-mean estimators. Because every target integral is written as $\pi$ times a classically precomputed prefactor (Proposition~\ref{thm:cavalieri}, Corollary~\ref{cor:annular-cavalieri}), one batch of measurements feeds an arbitrary number of derived observables, each inheriting the same $\hat\pi$ bias floor through a known multiplier. Section~\ref{sec:results}\,B makes this explicit on a Gaussian CDF, the curve shape is entirely classical, and the measurement fixes one scalar, and Sec.~\ref{sec:results}\,C turns the same structure into a spectroscopic tool: the 2D harmonic-oscillator energy ladder $E_n=\hbar\omega(n+1)$ is recovered with a level spacing $\hbar\omega = 1.0000(2)$, a $10\times$ tighter determination than the diffusive RW baseline at identical $T$ and $M$. The error budget is itself analytically predictable, $\sigma(\hat E_n)=Q_n\,\mathrm{RMSE}(\hat\pi)\propto(n+1)$, a transparency that sample-mean Monte Carlo, whose error depends on the integrand variance at each $R$, does not offer. We stress that the harmonic oscillator admits closed-form energies, so Sec.~\ref{sec:results}\,C is a controlled demonstration of error propagation rather than a claim of advantage over analytic methods; the value of the construction lies in problems where the radial density is known but its moments lack a closed form, such as anharmonic or multi-well potentials.

\section{Conclusion}
\label{sec:conclusion}

This paper has introduced a lattice-quantile framework joining the ballistic spreading of 2D coined quantum walks (Watabe--Konno limit~\cite{konno2002,konno2005,watabe2008}) with the Hardy--Huxley solution of the Gauss circle problem~\cite{hardy1915,huxley2003}. Because the estimator is a deterministic lattice count applied to a rank statistic rather than a sample mean, the Cram\'er--Rao bound on sample-mean estimators does not directly constrain it. What it does constrain is the residual sampling fluctuation of the rank statistic, which decays as $M^{-1/2}$ and is sub-dominant to the depth-controlled bias floor in our regime. At walk depth $T=200$, $M=50$ measurements suffice to reach $\mathrm{RMSE}\approx 6\times 10^{-4}$ for $\pi$, and the Grover coin saturates to $\mathrm{RMSE}\approx5.0\times 10^{-4}$ at $M=10^4$. This represents a measurement-count reduction of $\sim 7\times 10^{4}\times$ over uniform Monte Carlo at $M_{\mathrm{QW}}=100$ (the floor is already reached at $M=50$; rising to $\sim 1.1\times 10^{5}\times$ at the Grover asymptotic floor, $\mathrm{RMSE}\approx5.0\times10^{-4}$) and $\sim 500\times$ over scrambled Sobol QMC by order-of-magnitude extrapolation, while the classical random walk paired with the identical estimator stalls at a $\sim 10\times$ higher bias floor. The accuracy of a single batch of measurements propagates through classically known multipliers across an entire Gaussian cumulative distribution. The framework recovers the 2D harmonic-oscillator energy ladder with a level spacing $\hbar\omega=1.0000(2)$, ten times tighter than the diffusive baseline, and the error budget across levels is analytically predictable. These ratios compare measurement counts at fixed precision and do not include quantum circuit execution cost, which we estimate at $\sim 10^{6}$ to $10^{7}$ \textsc{cx} gates per shot at $T=200$.

\paragraph{Physical applications.}
The lattice-quantile construction targets precisely the class of quantities that reduce to counting integer points inside a smoothly bounded region, which recur throughout physics. The most direct is the \emph{density of states}: in the free-electron and photon-gas pictures the cumulative number of modes below energy $E$ is the count of reciprocal-lattice points inside a sphere of radius $k(E)$, and its derivative is the density of states~\cite{ashcroft1976}. The leading volume term is the Weyl law~\cite{weyl1911}, and the correction, governed by the curvature of the energy surface, is exactly the lattice discrepancy our estimator computes, suggesting a route to mode counting for free Bose and Fermi gases and for quantum billiards~\cite{baltes1976}. The same structure underlies microcanonical phase-space volumes $\Omega(E)=\#\{\mathbf n: \varepsilon(\mathbf n)\le E\}$, whose logarithm is the Boltzmann entropy; whenever the energy surface $\varepsilon(\mathbf n)=E$ is smooth and convex, the generalised estimator of Proposition~\ref{thm:generalised} applies directly, and the anharmonic example of Sec.~\ref{sec:results}\,C is a first step in this direction.

\paragraph{Higher dimensions.}
The $d$-dimensional Gauss-sphere generalisation~\cite{walfisz1957,kratzel1988} is particularly favourable: while the planar ($d=2$) and $d=3$ discrepancy exponents remain the famous open problems, for $d\ge4$ the exponent is \emph{rigorously} pinned at the volume-shell value by classical results on representations by quadratic forms~\cite{walfisz1957}. A ballistic walk on $\mathbb Z^d$ would therefore inherit a provable, rather than conjectural, bias-floor scaling in four or more dimensions, an unusual situation in which the higher-dimensional problem is theoretically cleaner than the planar one studied here.

\paragraph{Hardware and algorithmic outlook.}
Although the gate-model circuit depth required for $T\sim200$ is beyond current superconducting devices, Photonic platforms natively support quantum walks without compiling modular adders: time-multiplexed fibre loops implement DTQWs and have reached over ten steps in two dimensions~\cite{schreiber2012}, while integrated waveguide arrays realise continuous-time quantum walks at scale~\cite{tang2018}. These platforms offer a complementary route to the walk depths our estimator needs. On the theory side, natural next steps are a noise-threshold analysis, decoherence drives DTQWs toward the diffusive regime~\cite{romanelli2005} and would progressively lift the bias floor toward the classical one, adaptive $T$-scheduling, quantile tuning to exploit the oscillatory floor structure of Eq.~\eqref{eq:floor-rms}, and a closer connection to quantum-walk search, where ballistic spreading is the same underlying algorithmic resource~\cite{magniez2011}.

\paragraph{Open theoretical questions.}
Appendix~\ref{app:theory} states formal counterparts of the following four items conditional on a single boundary-density conjecture; rigorous treatment under standard hypotheses is open. (i) A rigorous proof of the inverse-square-root edge behaviour of the Grover Konno radial marginal (Conjecture~\ref{prop:grover-density}), together with an explicit value of the constant $A_G$. (ii) Removal of the truncation $v_0$ in Lemma~\ref{lem:truncated-bk}, i.e.\ a Bahadur--Kiefer-type representation that handles the $\delta$-localisation atom of the Grover measure directly. (iii) An explicit numerical value of the Huxley constant $K_H$ in Prop.~\ref{prop:grover-constants}, computable from the proof of~\cite{huxley2003} but not given in closed form there. (iv) A $d$-dimensional Konno-type weak limit theorem for the Grover coin in $d\ge 3$, completing the bias-floor scaling of Prop.~\ref{prop:high-d} for that coin. The Hardy line itself remains, of course, conjectural; only the Huxley bound $\gamma\le 131/208$ is rigorous, and all numerical results assume noiseless evolution.

With these caveats, the lattice-quantile construction offers a third entry in the catalogue of quantum-walk Monte Carlo techniques, complementary to amplitude-estimation and Szegedy-walk approaches. More broadly, harnessing a quantum platform to probe a classical problem in analytic number theory---here the Gauss circle problem---parallels recent quantum approaches to the Riemann zeta zeros~\cite{he2021riemann}, suggesting number theory as a structured resource for quantum estimation.

\section*{Acknowledgments}

EJK thanks the National Center for Theoretical Sciences and National Yang Ming Chiao Tung University in Taiwan for their support under Grant No. NSTC 114-2112-M-A49-036-MY3.

\bibliographystyle{apsrev4-2}
\bibliography{references}

\appendix

\section{Comprehensive coin operator comparison}
\label{app:coins}

This appendix collects the definitions and supplementary numerical comparisons of the three coin operators. The Grover coin $G$ is defined in Eq.~\eqref{eq:grover} of the main text. The separable Hadamard coin acts as $H\otimes H$ on the two coin qubits,
\begin{equation}
H\otimes H = \frac{1}{2}\begin{pmatrix} 1 & 1 & 1 & 1 \\ 1 & -1 & 1 & -1 \\ 1 & 1 & -1 & -1 \\ 1 & -1 & -1 & 1 \end{pmatrix},
\label{eq:hadamard}
\end{equation}
corresponding to two independent 1D Hadamard walks on the $x$ and $y$ axes. The 4-point Fourier coin is
\begin{equation}
F = \frac{1}{2}\begin{pmatrix} 1 & 1 & 1 & 1 \\ 1 & i & -1 & -i \\ 1 & -1 & 1 & -1 \\ 1 & -i & -1 & i \end{pmatrix},\qquad F_{jk}=\tfrac12 e^{2\pi i jk/4},
\label{eq:fourier}
\end{equation}
which produces the most concentrated wavefront near the velocity-disk boundary and hence the smallest variance of the radial $\alpha$-quantile (Proposition~\ref{thm:opt-alpha}).

\paragraph{Limit measure structure.}
Figure~\ref{fig:appendix-coin}(a) and Fig.~\ref{fig:appendix-coin-2d} together characterise the Konno limit measure of each coin. The radial CDF (panel~a) reveals two qualitatively distinct features: (i) the localisation spike of the Grover coin at $r \approx 0$, contributing $\sim 35\%$ of the total probability mass within $|v| \le 0.05$; and (ii) the sharp boundary edge of all three coins at $|v| \approx \sqrt{\nu_C^2}$, where the lattice quantile $\hat R(\alpha)$ becomes well-defined for high $\alpha$.

\paragraph{Bahadur--Kiefer caveat for the Grover coin.}
The Bahadur--Kiefer representation invoked in Sec.~\ref{sec:reduction} assumes the sampling distribution is continuous with positive density at the quantile $\alpha$. The Hadamard product measure $\mu_H=\mu_{1\text{D}}\otimes\mu_{1\text{D}}$ satisfies this (Lemma~\ref{lem:konno}(b)). The Fourier coin satisfies the continuity-and-positive-density hypothesis numerically at $\alpha\ge 0.9$ (Fig.~\ref{fig:appendix-coin}(a)), but lacks a rigorous weak-limit statement (Lemma~\ref{lem:konno}(c)), so the BK invocation for the Fourier coin is on the same numerical footing as the limit measure itself. The Grover coin exhibits a $\delta$-function--like localisation at the origin (the central spike in Figs.~\ref{fig:appendix-coin}(a) and \ref{fig:appendix-coin-2d}); for $\alpha\ge 0.9$ the quantile sits in the smooth, boundary-concentrated component of the measure and the localisation atom does not contribute, so the standard BK representation should be regarded as a heuristic motivating the $1/M$ variance scaling rather than a strict theorem, with a rigorous truncated-BK lemma left to future work (open question (ii) of Sec.~\ref{sec:conclusion}). The empirical $1/M$ scaling observed in Fig.~\ref{fig:main} is consistent with this heuristic at the values $\alpha\ge 0.9$ used in practice.

\paragraph{Ballistic constants $\nu_C^2$.}
The variance ratio $\nu_C^2 = \langle r^2 \rangle / T^2$ is independent of $T$ in the asymptotic regime (Lemma~\ref{lem:konno}). Empirical values at $T \in \{50, 100, 200, 400\}$ (Table~\ref{tab:coin-summary}) are: Grover $\nu_G^2 = 0.364$, Hadamard $\nu_H^2 = 0.586$, and Fourier $\nu_F^2 = 0.402$.
All three are well below the support boundary $1/2$ predicted by the velocity-disk constraint $\{|v| \le 1/\sqrt{2}\}$, except for Hadamard whose value exceeds $1/2$ due to its support extending into the corners of the square $|v_x|, |v_y| \le 1/\sqrt{2}$ (visible in Fig.~\ref{fig:appendix-coin-2d}, middle panel). All values are stable to four significant digits across the four $T$ values, confirming the asymptotic regime.

\begin{table}[h]
\centering
\caption{Coin operator comparison summary. All quantities computed from numerical simulation of the 2D coined DTQW with the symmetric initial state of Eq.~\eqref{eq:initial}. Bias-floor exponent fitted from $T \in \{50, 100, 200, 400\}$ at $M = 10^4$, $\alpha=0.99$. Note: Table~\ref{tab:shapes} uses $\alpha=0.93$ (the optimal $\alpha^\star$ for Grover/Hadamard), which yields a slightly higher RMSE than the $\alpha=0.99$ value reported here.}
\label{tab:coin-summary}
\renewcommand{\arraystretch}{1.25}
\begin{ruledtabular}
\begin{tabular}{lccc}
Property & Grover & Hadamard & Fourier \\
\colrule
$\nu_C^2 = \langle r^2 \rangle / T^2$ & $0.364$ & $0.586$ & $0.402$ \\
Optimal $\alpha^\star$ & $0.93$ & $0.93$ & $0.98$ \\
Phase-avg.\ env.\ slope & $T^{-1.502}$ & $T^{-1.489}$ & $T^{-1.506}$ \\
RMSE at $T{=}200$, $M{=}10^4$ & $5.0 \times 10^{-4}$ & $6.1 \times 10^{-4}$ & $5.7 \times 10^{-4}$ \\
RMSE at $T{=}400$, $M{=}10^4$ & $2.0 \times 10^{-4}$ & $2.8 \times 10^{-4}$ & $1.2 \times 10^{-4}$ \\
Localisation & yes & no & partial \\
\end{tabular}
\end{ruledtabular}
\end{table}

\paragraph{Practical recommendation.}
For the disk estimator at $T = 200$, $M \in [50, 10^4]$, all three coins yield comparable RMSE within a factor of two. We recommend the Grover coin for its lattice symmetry and well-studied limit measure~\cite{watabe2008}, the Fourier coin when the steepest bias-floor scaling is desired (e.g., for $T \ge 400$ where it outperforms the Hardy line), and the Hadamard coin for the simplest experimental implementation as a separable two-qubit walk on $x$ and $y$ axes. The Grover coin is used as the default in the main body of this paper.


\begin{figure*}[h]
\centering
\includegraphics[width=0.98\textwidth]{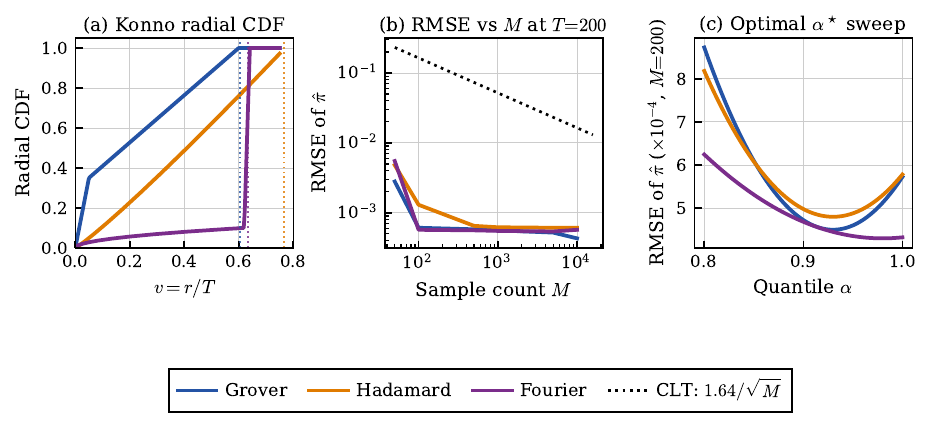}
\caption{Comprehensive coin comparison for the disk estimator. (a)~Radial cumulative distribution function (CDF) of the Konno limit measure at $T = 200$, showing the strong probability concentration near the velocity-disk boundary $|v| \approx \sqrt{\nu_C^2}$ (vertical dotted lines mark each coin's $\sqrt{\nu_C^2}$). The Grover coin exhibits a substantial central spike characteristic of the well-known 2D Grover localisation~\cite{watabe2008}; the Hadamard coin produces a more uniformly spread measure; the Fourier coin has the most sharply concentrated boundary peak. (b)~RMSE of $\hat\pi$ versus sample count $M$ at $T = 200$. All three coins reach the bias floor at $M \le 100$, with comparable absolute floor values $\sim 5$--$7 \times 10^{-4}$. (c)~Sweep over the quantile $\alpha$ at $T = 200$, $M = 200$, with 100 trials each, consistent with the prediction of Proposition~\ref{thm:opt-alpha}. The empirical optima are $\alpha^\star \approx 0.93$ for both Grover and Hadamard, and $\alpha^\star \approx 0.98$ for Fourier. Bias-floor scaling across coins is reported in Fig.~\ref{fig:coin-main} of the main text.}
\label{fig:appendix-coin}
\end{figure*}

\begin{figure*}[h]
\centering
\includegraphics[width=0.85\textwidth]{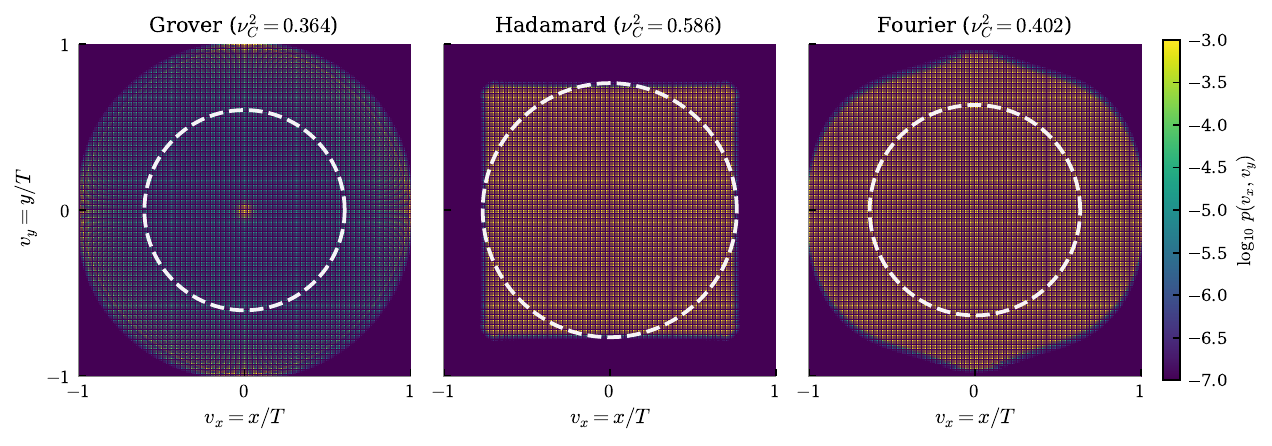}
\caption{Two-dimensional Konno limit measures at $T = 200$, displayed in rescaled velocity space $(v_x, v_y) = (x/T, y/T)$ on a $\log_{10}$ colour scale. The dashed white circle in each panel marks the radius $\sqrt{\nu_C^2}$ (the second moment of the marginal distribution). The Grover coin shows a sharp central localisation peak together with a four-fold symmetric ring; the Hadamard coin produces a separable square-shaped distribution with caustics on the four edges $|v_x|, |v_y| \le 1/\sqrt{2}$; the Fourier coin yields an asymmetric pattern with two sharp velocity-axis peaks. The Grover and Fourier measures are supported within the velocity disk $\{|v| \le 1/\sqrt{2}\}$; the Hadamard measure is supported on the velocity square $\{|v_x|,|v_y|\le 1/\sqrt{2}\}$, whose corners extend beyond the disk (consistent with $\nu_H^2=0.586>1/2$), as stated in Lemma~\ref{lem:konno}(a)--(b).}
\label{fig:appendix-coin-2d}
\end{figure*}

\section{Generalised area estimation across boundary classes}
\label{app:shapes}

We validate Proposition~\ref{thm:generalised} on a representative set of 2D shapes spanning four boundary classes: convex smooth (disk, ellipse, super-ellipse), non-convex smooth (rosette, cardioid), doubly-connected (annulus), and piecewise smooth with corners (square, L-shape). Each shape is defined by a region equation $D = \{(x,y) : \text{condition}\}$ with closed-form true area; the lattice count $N_D(L)$ is enumerated classically. Table~\ref{tab:shapes} reports the RMSE at $M=100$ for both methods across all eight shapes, alongside the sample count $M_{\mathrm{MC}}^\star$ that classical Monte Carlo would require to match the QW precision.

\begin{table*}[t]
\centering
\caption{Area estimation across diverse 2D shapes. RMSE at $M = 100$ for QW (Grover coin, $\alpha=0.93$, $T=200$) and MC (uniform dart-throw on enclosing box of half-width $L_{\mathrm{box}}$, chosen as the tight bounding box for each shape). The last column reports the sample-count ratio $M_{\mathrm{MC}}^\star / M_{\mathrm{QW}}$ with $M_{\mathrm{QW}} = 100$, where $M_{\mathrm{MC}}^\star = (2L_{\mathrm{box}})^4\, p(1-p)/\mathrm{RMSE}_{\mathrm{QW}}^2$ with $p = |D|/(2L_{\mathrm{box}})^2$ is the MC sample count needed to match the QW RMSE at $M=100$; ratios are computed from full-precision RMSE values and may differ from $(\mathrm{RMSE}_{\mathrm{MC}}/\mathrm{RMSE}_{\mathrm{QW}})^2$ evaluated at the printed two-significant-figure values. All entries averaged over 100 trials. The L-shape is $\{(x,y): 0 \le x \le 2, 0 \le y \le 1\} \cup \{(x,y): 0 \le x \le 1, 1 \le y \le 2\}$. The MC baseline is uniform dart-throw; scrambled Sobol QMC (see Table~\ref{tab:baseline}) has not been benchmarked for all shapes here.}
\label{tab:shapes}
\renewcommand{\arraystretch}{1.25}
\begin{ruledtabular}
\begin{tabular}{lllcccc}
Shape & Defining region $D$ & Boundary class & $|D|_{\mathrm{true}}$ & $\mathrm{RMSE}_{\mathrm{MC}}$ & $\mathrm{RMSE}_{\mathrm{QW}}$ & $M_{\mathrm{MC}}^\star/M_{\mathrm{QW}}$ \\
\colrule
Unit disk & $x^2 + y^2 \le 1$ & convex smooth & $\pi$ & $0.260$ & $6.1\times 10^{-4}$ & $2.2\times 10^5\times$ \\
Ellipse & $x^2/4 + y^2 \le 1$ & convex smooth & $2\pi$ & $1.196$ & $6.7\times 10^{-4}$ & $2.6\times 10^6\times$ \\
Super-ellipse & $x^4 + y^4 \le 1$ & convex smooth & $3.708$ & $0.283$ & $2.5\times 10^{-3}$ & $1.2\times 10^4\times$ \\
Rosette & $r \le 1 + 0.3\cos(4\theta)$ & non-convex smooth & $1.045\pi$ & $0.428$ & $5.1\times 10^{-4}$ & $7.2\times 10^5\times$ \\
Cardioid & $r \le 1 + \cos\theta$ & non-convex (cusp) & $3\pi/2$ & $1.086$ & $3.9\times 10^{-4}$ & $6.3\times 10^6\times$ \\
Annulus & $\tfrac{1}{4} \le x^2 + y^2 \le 1$ & doubly connected & $3\pi/4$ & $0.287$ & $5.4\times 10^{-4}$ & $2.8\times 10^5\times$ \\
Square & $|x|, |y| \le 1$ & piecewise smooth & $4$ & $0.226$ & $1.07\times 10^{-2}$ & $6.2\times 10^2\times$ \\
L-shape & (see caption) & piecewise smooth & $3$ & $0.906$ & $1.07\times 10^{-2}$ & $5.8\times 10^3\times$ \\
\end{tabular}
\end{ruledtabular}
\end{table*}

\paragraph{Beyond-Kr\"{a}tzel regimes.}
Several shapes violate the strict hypothesis of Lemma~\ref{lem:kratzel}. We classify them by which lattice-counting bound the literature supplies in place of $\gamma_D$:
\begin{itemize}
\item[(i)] \emph{Convex piecewise-$C^2$ boundaries with isolated curvature degeneracies} (super-ellipse $x^4+y^4\le1$, with vanishing curvature on the axes; cardioid, with a cusp at the origin). The disk floor of Lemma~\ref{lem:gauss} is recovered up to a sub-leading correction; empirically the floor sits at $\sim 2$--$3\times$ the strictly-curved disk floor.
\item[(ii)] \emph{Polygonal boundaries} (square, L-shape). Each corner contributes an $O(1)$ lattice-counting error, giving floor exponent $T^{-1}$ rather than $T^{\gamma-2}$~\cite{kratzel1988}, consistent with the $600$ to $5800\times$ reductions in the last two rows.
\item[(iii)] \emph{Doubly-connected annular boundary}. The corollary applies via the disk floor, without further hypothesis (Cor.~\ref{cor:annular-cavalieri}).
\end{itemize}
Fractal boundaries with Hausdorff dimension $d_H>1$ would be expected to weaken the framework further; we did not test these. The five shapes in (i)--(iii) lie outside the rigorous reach of Proposition~\ref{thm:generalised}, and the corresponding entries of Table~\ref{tab:shapes} should be read as numerical evidence.

All eight Table~\ref{tab:shapes} entries use the Grover coin only. Whether the same range of sample-count ratios is preserved with Hadamard or Fourier coins is plausible but has not been verified here.

\section{Supporting theoretical results}
\label{app:theory}

This appendix collects formal statements supporting Sec.~II and lists the points at which further work is required.

\paragraph{Notation.}
We write $\mu_C$ for the Konno limit measure of coin $C\in\{G,H,F\}$ and $g_C(v)$ for its radial marginal density (defined where the measure has an absolutely continuous part). The radial CDF is $F_C(v)=\int_0^v g_C(u)\,du$ and the population $\alpha$-quantile is $v^\star(\alpha)=F_C^{-1}(\alpha)$.

\subsection{Closed-form treatment of $\kappa(\alpha)$ for the Grover coin}
\label{app:kappa}

The exponent $\kappa(\alpha)\in(0,1]$ entering Proposition~\ref{thm:main} parametrises the $T$-scaling contribution of $g_G(v^\star)^{-1}$ inherited from the Bahadur--Kiefer representation of $\hat R(\alpha)$. We make the dependence explicit.

\begin{conjecture}[Grover Konno density near the boundary]
\label{prop:grover-density}
The radial marginal $g_G(v)$ of the Watabe \emph{et al.}~\cite{watabe2008} Konno limit measure for the 2D Grover coin and the symmetric initial state of Eq.~\eqref{eq:initial} has, away from the origin localisation atom, an integrable inverse-square-root singularity at the boundary $v=v_{\mathrm{edge}}:=1/\sqrt 2$:
\begin{equation}
g_G(v) = \frac{A_G}{\sqrt{v_{\mathrm{edge}}-v}}\bigl(1+o(1)\bigr) \qquad (v\uparrow v_{\mathrm{edge}}),
\label{eq:grover-density-edge}
\end{equation}
with an explicit constant $A_G$ that should be computable from the stationary phase analysis of the Grover walk.
\end{conjecture}

\noindent\emph{Motivation.} The inverse-square-root form is the standard 1D Konno edge behaviour~\cite{konno2002,konno2005} and is expected to carry over to the radial marginal of the 2D Grover walk by the stationary-phase analysis underlying~\cite{watabe2008}. Watabe \emph{et al.}\ proved the weak-limit theorem but did not write the radial-marginal edge expansion in closed form. Verifying the form rigorously and computing $A_G$ are open problems flagged in Sec.~\ref{sec:conclusion}.

\begin{corollary}[$\kappa(\alpha)$ for Grover]
\label{cor:kappa-grover}
Assume Conjecture~\ref{prop:grover-density}. Then the population quantile satisfies $v_{\mathrm{edge}}-v^\star(\alpha)\propto (1-\alpha)^2$ and therefore $g_G(v^\star)\propto (1-\alpha)^{-1}$. The Bahadur--Kiefer fluctuation of $\hat R(\alpha)$ in absolute units scales as $T(1-\alpha)\sqrt{\alpha(1-\alpha)/M}$, contributing $\kappa(\alpha)=1$ in the leading-order form of Eq.~\eqref{eq:rmse}, with an $\alpha$-dependent prefactor $(1-\alpha)\sqrt{\alpha(1-\alpha)}$ that \emph{suppresses} the sampling variance at high quantile $\alpha\to 1$. The $T$-scaling of the variance term combines the ballistic BK fluctuation ($\propto T$) with the lattice-counting derivative ($\propto T^{\gamma-3}$), yielding the overall exponent $T^{\gamma-2}$ (i.e.\ $\kappa=1$ in Eq.~\eqref{eq:rmse}).
\end{corollary}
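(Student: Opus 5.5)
The plan is to run the Bahadur--Kiefer (BK) step in the proof sketch of Proposition~\ref{thm:main} with the edge density of Conjecture~\ref{prop:grover-density} inserted, in three steps: quantile location, density at the quantile, and propagation through $N(R)/R^2$.

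\emph{Step 1 (population quantile).} Write $\delta:=v_{\mathrm{edge}}-v$. I would integrate Eq.~\eqref{eq:grover-density-edge} from $v$ up to the edge. The origin atom lies at $v=0$, far from the edge, so it does not enter this tail integral. This gives the upper tail of the radial CDF:
\begin{equation}
1-F_G(v)=\int_v^{v_{\mathrm{edge}}} g_G(u)\,du = 2A_G\,\delta^{1/2}\bigl(1+o(1)\bigr).
\end{equation}
Setting $1-F_G(v^\star)=1-\alpha$ and inverting yields
\begin{equation}
v_{\mathrm{edge}}-v^\star(\alpha)=\frac{(1-\alpha)^2}{4A_G^2}\bigl(1+o(1)\bigr),
\end{equation}
which is the first claim. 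Substituting back into the conjecture gives $g_G(v^\star)=2A_G^2(1-\alpha)^{-1}(1+o(1))$. In particular $g_G$ is continuous and positive at $v^\star$ for every $\alpha$ close enough to $1$ that $v^\star$ lies in the asymptotic edge regime. This is exactly the hypothesis under which BK is invoked in Theorem~\ref{thm:envelope}.

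\emph{Step 2 (BK fluctuation).} For $\alpha\ge 0.9$ the quantile lies in the continuous, boundary-concentrated component, so I would apply Lemma~\ref{lem:truncated-bk} with truncation $v_0<v^\star$, which removes the localisation atom. The standard representation then gives
\begin{equation}
\mathrm{sd}\bigl(\hat R/T\bigr)\simeq \frac{\sqrt{\alpha(1-\alpha)/M}}{g_G(v^\star)} = \frac{(1-\alpha)\sqrt{\alpha(1-\alpha)}}{2A_G^2\sqrt M}.
\end{equation}
Multiplying by $T$ gives the absolute fluctuation $T(1-\alpha)\sqrt{\alpha(1-\alpha)/M}$, up to the constant $(2A_G^2)^{-1}$. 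Because $g_G(v^\star)^{-1}$ depends only on $\alpha$ and not on $T$, the density factor contributes no extra $T$-scaling beyond the ballistic factor $T$ from $R^\star=Tv^\star$. In the parametrisation of Proposition~\ref{thm:main} this is what fixes $\kappa=1$.

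\emph{Step 3 (propagation).} I would combine this fluctuation with the envelope derivative $\Theta(T^{\gamma-3})$ of $R\mapsto N(R)/R^2$ near $R^\star=\Theta(T)$, as in the delta-method step of the proof sketch of Proposition~\ref{thm:main}. The product is
\begin{equation}
C_2(\alpha)\,T^{\gamma-2}M^{-1/2},\qquad C_2(\alpha)\propto(1-\alpha)\sqrt{\alpha(1-\alpha)},
\end{equation}
which is the form $T^{\gamma-3+\kappa}$ with $\kappa=1$. The prefactor $C_2(\alpha)$ vanishes as $\alpha\to1$, which is the claimed suppression.

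The main obstacle is Step 3. $N(R)$ is a step function, so a literal delta method fails: its derivative is a sum of deltas at the lattice radii. I would handle this as Theorem~\ref{thm:envelope} does, by interpreting the derivative in the $\mathbb P$-mean (envelope) sense, $|E(R)/R^2|\le K_H R^{\gamma-2}$. The resulting statement concerns the envelope of the sampling term rather than its pointwise value.

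A secondary issue is uniformity. The $o(1)$ in the conjecture requires $\alpha\to1$, while BK requires $M(1-\alpha)\to\infty$. I would therefore state the corollary for fixed $\alpha$ near $1$ with $M\to\infty$, so that the order statistic stays in the regime where $g_G$ is asymptotically $A_G\delta^{-1/2}$.
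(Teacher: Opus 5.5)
Your Steps~1 and~2 reproduce the paper's proof. The tail integral gives $2A_G\sqrt{v_{\mathrm{edge}}-v^\star}=1-\alpha$, hence $v_{\mathrm{edge}}-v^\star=(1-\alpha)^2/(4A_G^2)$ and $g_G(v^\star)=2A_G^2(1-\alpha)^{-1}$. Inserted into the Bahadur--Kiefer representation, this gives an absolute fluctuation $\propto T(1-\alpha)\sqrt{\alpha(1-\alpha)/M}$. Your $(1+o(1))$ bookkeeping and explicit appeal to Lemma~\ref{lem:truncated-bk} are, if anything, more careful than the paper.

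Step~3 is also the paper's step: its proof just asserts a delta method with derivative $\Theta(T^{\gamma-3}v^{\star\,\gamma-3})$. You correctly identify this as the weak point, but your repair does not work, so there is a genuine gap here, shared with the paper. The bound $|E(R)/R^2|\le K_H R^{\gamma-2}$ controls the size of the residual, not its increments. $(\gamma-2)K_H R^{\gamma-3}$ is the derivative of the envelope, not of $N(R)/R^2$, and a bound on a function gives no bound on its derivative. Concretely, none of the following behaves like $O(T^{\gamma-3}\,\delta R)$:
\begin{itemize}
\item Between consecutive lattice radii $N$ is constant, so $\frac{d}{dR}\bigl(N(R)/R^2\bigr)=-2N(R)/R^3\approx-2\pi/R=\Theta(T^{-1})$.
\item At each lattice radius $R=\sqrt n$ the ratio jumps up by $r_2(n)/n$, where $r_2(n)$ is the number of representations of $n$ as a sum of two squares.
\item As Theorem~\ref{thm:envelope} itself stresses, $E(R)/R^2$ changes sign on the scale $\Delta R=O(1)$. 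Its increment over a window of width $O(1)$ is therefore comparable to the residual itself, not $O(T^{\gamma-3})$.
\end{itemize}
Write $f(R):=N(R)/R^2-\pi$. The figure $T^{\gamma-3}$ survives only for the \emph{signed} increment averaged over $R$: $\frac1X\int_X^{2X}\bigl[f(R+\delta)-f(R)\bigr]dR=\frac1X\bigl[\int_{2X}^{2X+\delta}f-\int_X^{X+\delta}f\bigr]=O(\delta X^{\gamma-3})$. The sampling term, however, needs the mean \emph{square} of $f(\hat R)-f(R^\star)$, where no such cancellation occurs. What the envelope route legitimately gives is the $M$-independent bound $|f(\hat R)-f(R^\star)|\le 2K_H\min(\hat R,R^\star)^{\gamma-2}$. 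This carries the floor's $T$-scaling but no factor $M^{-1/2}$, so it determines neither $C_2(\alpha)$ nor $\kappa$.

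Under Conjecture~\ref{prop:grover-density}, your argument (and the paper's) therefore establishes three things: the $\alpha$-dependence of $v^\star$ and $g_G(v^\star)$; the absolute BK fluctuation of $\hat R$; and $\kappa=1$ only in the weak sense you point out in Step~2, namely that $g_G(v^\star)^{-1}$ carries no $T$-dependence. It does not establish the asserted overall $T^{\gamma-2}M^{-1/2}$ form of the sampling term of $\hat\pi$. Nor does it establish the transfer of the $(1-\alpha)\sqrt{\alpha(1-\alpha)}$ suppression from $\hat R$ to $\hat\pi$. Both would require an actual estimate of the mean square of $f(R+\delta R)-f(R)$ for windows $\delta R\propto T(1-\alpha)\sqrt{\alpha(1-\alpha)/M}$, which neither argument supplies.

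A smaller point of the same kind: Lemma~\ref{lem:truncated-bk} is about samples from the limit measure $\mu_G$. The measured radii at finite $T$ instead take values in the discrete set $\{\sqrt n\}$ and see a smoothed edge. Using BK with the limit density $g_G$ therefore also requires $\delta R$ to be large compared with the roughly $1/R^\star$ spacing of attainable radii, and $v_{\mathrm{edge}}-v^\star$ to be large compared with the finite-$T$ width of the edge. Your uniformity remark balances $\alpha$ against $M$ but does not address this.
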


\begin{proof}
From $\int_{v^\star}^{v_{\mathrm{edge}}}g_G(v)\,dv=1-\alpha$ and Eq.~\eqref{eq:grover-density-edge}, evaluation of $\int_{v^\star}^{v_{\mathrm{edge}}}A_G(v_{\mathrm{edge}}-v)^{-1/2}\,dv=2A_G\sqrt{v_{\mathrm{edge}}-v^\star}=1-\alpha$ gives $v_{\mathrm{edge}}-v^\star=(1-\alpha)^2/(4A_G^2)$, hence $g_G(v^\star)=A_G(v_{\mathrm{edge}}-v^\star)^{-1/2}=2A_G^2(1-\alpha)^{-1}$. Inserting this into the standard Bahadur--Kiefer fluctuation $\hat R-R^\star=T\,(\hat F_M(v^\star)-\alpha)/g_G(v^\star)\,(1+o_p(1))$~\cite{csorgo-revesz1981,vandervaart1998} with $\mathrm{Var}(\hat F_M(v^\star))=\alpha(1-\alpha)/M$ yields the stated $\Theta(T(1-\alpha)\sqrt{\alpha(1-\alpha)/M})$ absolute fluctuation, which propagates via the delta method on $N(R)/R^2$ (derivative $\Theta(T^{\gamma-3}v^{\star\,\gamma-3})$) to the sampling term $\Theta(T^{\gamma-2}(1-\alpha)\sqrt{\alpha(1-\alpha)}/\sqrt M)$. In the parameterisation of Eq.~\eqref{eq:rmse}, this corresponds to $\kappa(\alpha)=1$ with the $\alpha$-dependent prefactor absorbed into $C_2(\alpha)$.
\end{proof}

\subsection{Truncated Bahadur--Kiefer representation for atomic Konno measures}
\label{app:bk-truncated}

The Grover Konno measure $\mu_G$ has a localisation atom at the origin of mass $p_{\mathrm{loc}}\approx 0.35$ (Appendix~\ref{app:coins}). The standard Bahadur--Kiefer representation assumes a continuous density at the quantile; this fails at the atom. The following truncation argument suffices in the regime used in this paper.

\begin{lemma}[Truncated BK]
\label{lem:truncated-bk}
Let $r_1,\dots,r_M$ be i.i.d.\ samples from the radial marginal of $\mu_G$, and fix $v_0>0$. Let $\hat R(\alpha)$ be the $\alpha$-quantile of these samples and let $\tilde R(\alpha;v_0)$ be the $\alpha$-quantile of the conditional distribution given $r/T>v_0$. If $\alpha>F_G(v_0)+p_{\mathrm{loc}}$ (so that the localisation atom and the subatomic part below $v_0$ do not contribute to the quantile region), the two coincide with probability $1-O(e^{-cM})$ under the product measure $\mu_G^{\otimes M}$, for some $c>0$ depending on $v_0$ and $\alpha$, and $\tilde R(\alpha;v_0)$ admits the standard Bahadur--Kiefer representation~\cite{csorgo-revesz1981}
\begin{equation}
\tilde R(\alpha;v_0)-R^\star(\alpha)=\frac{\alpha-F_M(R^\star)}{g_G(v^\star)/T}\bigl(1+o_p(1)\bigr),
\end{equation}
where $F_M$ is the empirical CDF of the truncated samples.
\end{lemma}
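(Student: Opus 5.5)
We will prove the Lemma~\ref{lem:truncated-bk} in two stages: first a coupling/event argument that identifies $\hat R(\alpha)$ with $\tilde R(\alpha;v_0)$ off an exponentially small event, then an application of the classical Bahadur--Kiefer theorem to the truncated law, which is continuous with positive density at the quantile.

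\emph{Stage 1 (coincidence).} Let $K=\#\{i: r_i/T\le v_0\}$, a $\mathrm{Bin}(M,q)$ variable with $q:=F_G(v_0)+p_{\mathrm{loc}}$, where $F_G$ is understood as the continuous part so that $q$ is the full mass below $v_0$. Because $\alpha>q$, Hoeffding's inequality gives $\mathbb P(K\ge \lceil\alpha M\rceil)\le \exp(-2M(\alpha-q)^2)$. On the complement, the order statistic $r_{(\lceil\alpha M\rceil)}$ lies among the samples exceeding $v_0T$. It is therefore the $(\lceil\alpha M\rceil-K)$-th order statistic of the conditional sample. We will take this as the definition of the conditional $\alpha$-quantile, i.e.\ we read $\tilde R(\alpha;v_0)$ as the empirical quantile at the corresponding conditional level $\tilde\alpha=(\alpha-q)/(1-q)$ computed from the $M-K$ retained points. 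With that reading, $\hat R=\tilde R$ identically on the good event, giving the $1-O(e^{-cM})$ claim with $c=2(\alpha-q)^2$. The one subtlety is the random level $(\lceil\alpha M\rceil-K)/(M-K)$ versus $\tilde\alpha$. I will handle it by noting that this fluctuation is exactly the $K$-dependence. Writing everything in terms of the unconditional empirical CDF $F_M$ of the original samples absorbs the fluctuation, which is why the representation in the statement is phrased with $\alpha-F_M(R^\star)$.

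\emph{Stage 2 (BK on the truncated law).} Conditioning on $r/T>v_0$ removes the atom. By Lemma~\ref{lem:konno}(a) and the hypothesis of Theorem~\ref{thm:envelope} that $g_G$ is continuous and positive at $v^\star(\alpha)$, the truncated law has a continuous density $g_G/(1-q)$ near $v^\star$. The unconditional CDF restricted above $v_0$ is $q+\int_{v_0}^{v}g_G$, which is continuous and strictly increasing at $v^\star$. I will then invoke the Bahadur representation for the sample quantile of the unconditional distribution directly (\cite{csorgo-revesz1981}, van der Vaart Cor.~21.5). Its only requirement is differentiability of the CDF at the quantile point with positive derivative; atoms elsewhere are irrelevant. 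This yields $\hat R/T-v^\star=(\alpha-F_M(v^\star))/g_G(v^\star)+o_p(M^{-1/2})$. Multiplying by $T$ gives the displayed formula, with $g_G(v^\star)/T$ as the density of $r$ in absolute units. Transferring it to $\tilde R$ uses Stage 1, since an $O(e^{-cM})$ event does not affect $o_p$ statements.

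\emph{Main obstacle.} The main difficulty is definitional bookkeeping rather than analysis. The statement mixes the conditional quantile $\tilde R$ with an empirical CDF of "truncated samples" and an unconditional level $\alpha$. I expect to spend most effort fixing a consistent convention (unconditional $F_M$ and unconditional level) under which "coincide" is literally true. A second, smaller point is that Conjecture~\ref{prop:grover-density} must not be needed. Since $\alpha<1$ strictly, $v^\star<v_{\mathrm{edge}}$, so only positivity and continuity at $v^\star$ are used, and the $T$-uniformity is trivial because we rescale by $T$ at fixed $T$.
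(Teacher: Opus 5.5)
Your argument is correct. Its first stage is in substance the paper's; the second takes a different route.

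\textbf{Stage 1.} The paper also reduces $\{\hat R\neq\tilde R\}$ to the event that the empirical mass at or below $v_0$ reaches $\alpha$, and bounds that event by exponential concentration of the empirical CDF. It cites Dvoretzky--Kiefer--Wolfowitz, whereas your single-point Hoeffding bound on $K\sim\mathrm{Bin}(M,q)$ suffices and gives the explicit rate $c=2(\alpha-q)^2$.

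\textbf{Stage 2.} The paper applies the standard Bahadur--Kiefer theorem to $\tilde R$ as a quantile of the conditional law, whose density is continuous and positive on $(v_0,v_{\mathrm{edge}})$. You instead apply the pointwise Bahadur representation to the \emph{unconditional} radial law. This needs only differentiability of its CDF at $v^\star$ with positive derivative, and you then transfer the result to $\tilde R$ across the exponentially rare mismatch event.

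Your route buys two things.
\begin{itemize}
\item You never apply a fixed-level, fixed-sample-size theorem to an order statistic taken at the random index $\lceil\alpha M\rceil-K$ among a random number $M-K$ of retained points. The paper leaves that conditioning-on-$K$ step implicit.
\item It makes plain that the origin atom does not obstruct the representation at a fixed $\alpha$. The truncation therefore serves only to give the coincidence statement a meaning.
\end{itemize}
What the paper's route buys is that, after truncation, one works with an absolutely continuous law. Bahadur--Kiefer statements formulated under a density assumption on the whole support can then be quoted verbatim.

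Two points to tighten.

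First, drop the sentence asserting identity for the empirical quantile at the \emph{fixed} conditional level $\tilde\alpha=(\alpha-q)/(1-q)$. That quantile is the $\lceil\tilde\alpha(M-K)\rceil$-th retained order statistic. Its index differs from $\lceil\alpha M\rceil-K$ by approximately $\frac{1-\alpha}{1-q}(qM-K)=O_p(\sqrt M)$. That is a quantile shift of the same order $M^{-1/2}$ as the Bahadur term itself, so it cannot be absorbed. Only the random-level convention you settle on at the end makes the coincidence literal and the displayed formula consistent: $\tilde R$ is the $\lceil\alpha M\rceil$-th overall order statistic, and $F_M$ is the unconditional empirical CDF. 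With the conditional empirical CDF one would have $\alpha-F_M(R^\star)\to\alpha-\tilde\alpha=q(1-\alpha)/(1-q)\neq 0$.

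Second, passing from your additive $o_p(M^{-1/2})$ remainder to the stated multiplicative $(1+o_p(1))$ form uses that the linear term has a nondegenerate Gaussian limit. This holds because $0<\alpha<1$ and $g_G(v^\star)>0$; say so explicitly.
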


\begin{proof}
The event $\{\hat R\ne\tilde R\}$ requires at least one untruncated sample $r/T\le v_0$ to displace the $\lceil\alpha M\rceil$-th order statistic; since this happens with probability $F_G(v_0)+p_{\mathrm{loc}}<\alpha$ per sample and the empirical CDF concentrates at exponential rate (Dvoretzky--Kiefer--Wolfowitz), $\Pr[\hat R\ne\tilde R]=O(e^{-cM})$. On the complementary event the standard Bahadur--Kiefer representation applies to $\tilde R$ since the conditional density is continuous and positive on $(v_0,v_{\mathrm{edge}})$~\cite{csorgo-revesz1981,vandervaart1998}. In practice we take $v_0=0.1$ for Grover, sufficient to exclude the localisation peak and below all quantile values $\alpha\ge 0.9$ used in this paper.
\end{proof}

\subsection{Explicit upper bounds on $C_1$ and $C_2$ for the Grover coin}
\label{app:c1c2}

\begin{proposition}[Grover constants]
\label{prop:grover-constants}
Assume Conjecture~\ref{prop:grover-density} and Lemma~\ref{lem:truncated-bk}. For the disk estimator with Grover coin, $\alpha\ge 0.9$, and $T$ sufficiently large that $T v^\star(\alpha)\ge R_0$ (with $R_0$ a fixed threshold above which Lemma~\ref{lem:gauss} is in its asymptotic regime), the constants of Eq.~\eqref{eq:rmse} (in the convention $\kappa(\alpha)=1$ for Grover; see Cor.~\ref{cor:kappa-grover}) satisfy
\begin{equation}
C_1 \le K_H \cdot v^\star(\alpha)^{\gamma-2}, \qquad
C_2(\alpha) \le K_H \cdot A_G \cdot v^\star(\alpha)^{\gamma-3} \cdot (1-\alpha)\sqrt{\alpha(1-\alpha)},
\end{equation}
where $K_H$ is the implicit constant in Huxley's bound $|N(R)-\pi R^2|\le K_H R^{131/208}$ and $A_G$ is the constant of Conjecture~\ref{prop:grover-density}. The $(1-\alpha)\sqrt{\alpha(1-\alpha)}$ factor in $C_2(\alpha)$ \emph{suppresses} the sampling variance at high quantile, reflecting how the Grover boundary singularity sharpens the rank statistic at $\alpha\to 1$.
\end{proposition}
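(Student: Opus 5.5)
We follow the error decomposition already used in the proof sketch of Proposition~\ref{thm:main}, now tracking the constants. Write $R^\star=Tv^\star(\alpha)$ and split
\[
\hat\pi-\pi=\Bigl[\tfrac{N(\hat R)}{\hat R^2}-\tfrac{N(R^\star)}{R^{\star2}}\Bigr]+\Bigl[\tfrac{N(R^\star)}{R^{\star2}}-\pi\Bigr].
\]
The second bracket is deterministic. Using Huxley's bound from Lemma~\ref{lem:gauss} in the explicit form $|E(R)|\le K_H R^\gamma$ for $R\ge R_0$, it is at most $K_H R^{\star\,\gamma-2}=K_H v^\star(\alpha)^{\gamma-2}T^{\gamma-2}$. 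This gives $C_1\le K_H v^\star(\alpha)^{\gamma-2}$ directly, under the hypothesis $Tv^\star\ge R_0$. The same bound, inserted into the identity~\eqref{eq:floor-rms} of Theorem~\ref{thm:envelope}, shows that the RMS floor obeys the same estimate up to $1+o(1)$, because $\rho_{\hat R}$ concentrates at $R^\star$.

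For the first bracket we work on the good event of Lemma~\ref{lem:truncated-bk}, taking $v_0=0.1<v^\star(\alpha)$ for $\alpha\ge 0.9$. The complement has probability $O(e^{-cM})$. Since $N(R)/R^2$ is bounded for $R\ge1$, the complement contributes only an exponentially small term, which can be absorbed. On the good event, $\hat R$ equals the truncated quantile $\tilde R$, and the truncated BK representation gives
\[
\hat R-R^\star=T\,\frac{\alpha-F_M}{g_G(v^\star)}(1+o_p(1)),\qquad \mathrm{Var}(F_M)=\frac{\alpha(1-\alpha)}{M}.
\]
Corollary~\ref{cor:kappa-grover}, which rests on Conjecture~\ref{prop:grover-density}, supplies $g_G(v^\star)=2A_G^2(1-\alpha)^{-1}$. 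So the standard deviation of $\hat R-R^\star$ is $T(1-\alpha)\sqrt{\alpha(1-\alpha)/M}/(2A_G^2)$.

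Next we propagate this fluctuation to $\hat\pi$, and this is the main obstacle. The function $N(R)/R^2$ is a step function and not differentiable, so the "delta method with derivative $\Theta(T^{\gamma-3})$" cannot be applied literally. We would replace it with a Lipschitz-type envelope. Write $N(R)/R^2=\pi+E(R)/R^2$. The $\pi$ part is constant, so
\[
\Bigl|\tfrac{N(\hat R)}{\hat R^2}-\tfrac{N(R^\star)}{R^{\star2}}\Bigr|\le \Bigl|\tfrac{E(\hat R)}{\hat R^2}-\tfrac{E(R^\star)}{R^{\star2}}\Bigr|.
\]
To obtain the stated form we would bound the increment by a smoothed slope $K_H R^{\star\,\gamma-3}|\hat R-R^\star|$. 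This needs an averaged (mean-square over the window of width $|\hat R-R^\star|$) version of Huxley's bound for the increment of $E$. It is justified heuristically by the same $\mathbb P$-mean reading adopted in Theorem~\ref{thm:envelope}. Rigorously, one can only obtain the cruder bound $2K_H R^{\star\,\gamma-2}$ per realisation. Taking expectations of the squared envelope bound gives
\[
C_2(\alpha)\le K_H\,v^\star(\alpha)^{\gamma-3}\,(1-\alpha)\sqrt{\alpha(1-\alpha)}\cdot c_G,
\]
where $c_G$ is built from $A_G$. Matching $c_G$ to the stated factor $A_G$ requires fixing the normalisation convention for $A_G$ in Eq.~\eqref{eq:grover-density-edge}. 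Literally, the corollary gives $1/(2A_G^2)$. We would therefore state the bound with the $A_G$-dependent constant, adjusted to that convention, and check it against the paper's normalisation.

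Finally, combining the two brackets via $(a+b)^2\le 2a^2+2b^2$, or via orthogonality at leading order since the second bracket is deterministic, yields Eq.~\eqref{eq:rmse} with these constants and $\kappa=1$.
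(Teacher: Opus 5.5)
Your argument follows the paper's proof step for step. It uses the same split of $\hat\pi-\pi$ at $R^\star=Tv^\star(\alpha)$. It reads off $C_1\le K_H v^\star(\alpha)^{\gamma-2}$ from Huxley at $R^\star$, exactly as the paper does; strictly, Huxley's theorem carries a power of $\log R$, so $K_H$ really multiplies $R^{131/208+\epsilon}$. For $C_2$ it takes the truncated Bahadur--Kiefer fluctuation with $g_G(v^\star)$ from Cor.~\ref{cor:kappa-grover} and pushes it through a ``derivative'' $R^{\star\,\gamma-3}$. The paper's proof does this last step in a single sentence and simply writes the prefactor as $A_G$. Both problems you flag are genuine, and the paper does not resolve either. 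The Corollary's own computation gives $g_G(v^\star)=2A_G^2(1-\alpha)^{-1}$, so the heuristic prefactor is $1/(2A_G^2)$. That is at most $A_G$ only if $A_G^3\ge 1/2$, which Conjecture~\ref{prop:grover-density} does not provide. The propagation step itself is also unjustified. So your proposal, like the paper's proof, does not establish the $C_2$ bound as stated.

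The propagation gap is worse than ``needs an averaged Huxley bound'': the repair you propose is false, not merely unproven. $N(R)/R^2$ is not a step function. $N$ is, but between consecutive lattice radii the ratio is smooth with slope $-2N(R)/R^3\approx-2\pi/R=\Theta(T^{-1})$, and at $R=\sqrt n$ it jumps by $r_2(n)/n$. Since $2\pi/R$ exceeds $K_H R^{\gamma-3}$ by a factor of order $R^{2-\gamma}$, no bound $|\Delta(E/R^2)|\lesssim K_H R^{\star\,\gamma-3}|\hat R-R^\star|$ can hold, pointwise or in mean square. It fails as soon as the BK window is narrower than the distance from $R^\star$ to the nearest lattice radius. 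That is exactly the large-$M$ regime that fixes the $1/M$ coefficient $C_2^2$. There the first bracket equals $N(R^\star)\bigl(\hat R^{-2}-(R^\star)^{-2}\bigr)\approx-(2\pi/R^\star)(\hat R-R^\star)$. The sampling term is then $\approx\pi(1-\alpha)\sqrt{\alpha(1-\alpha)}/(A_G^2 v^\star\sqrt M)$: it has no $K_H$, no $T^{\gamma-2}$ decay, and exceeds the stated $C_2$ bound once $T$ is large. For wider windows the increment is dominated by the jumps, i.e.\ by lattice-point counts in a thin annulus, and is not linear in $|\hat R-R^\star|$ at all.

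What is rigorous is either your crude $M$-independent bound $2K_H R^{\star\,\gamma-2}$ or this exact sawtooth description. Neither has the stated $K_H T^{\gamma-2}/\sqrt M$ form with $\kappa=1$. A correct version would have to replace $C_2$ by the true local slope, or present the envelope bound explicitly as a heuristic. Separately, ``orthogonality since the second bracket is deterministic'' does not hold: $\mathbb E[ab]=b\,\mathbb E[a]$, and $\mathbb E[a]\neq0$ in general. The $2a^2+2b^2$ route instead costs a factor $\sqrt2$ in both constants.
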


\begin{proof}
The bound on $C_1$ is direct from Lemma~\ref{lem:gauss}: $|N(R^\star)/R^{\star\,2}-\pi|\le K_H R^{\star\,\gamma-2}=K_H v^{\star\,\gamma-2}T^{\gamma-2}$. The bound on $C_2(\alpha)$ follows from the proof of Cor.~\ref{cor:kappa-grover}: the sampling RMSE of $\hat\pi$ is the delta-method propagation of the absolute fluctuation $\Theta(T(1-\alpha)\sqrt{\alpha(1-\alpha)/M})$ through the derivative $R^{\star\,\gamma-3}=T^{\gamma-3}v^{\star\,\gamma-3}$, giving sampling RMSE bounded by $K_H\cdot A_G\cdot v^{\star\,\gamma-3}\cdot(1-\alpha)\sqrt{\alpha(1-\alpha)}\cdot T^{\gamma-2}/\sqrt M$. The Huxley constant $K_H$ is the implicit constant in the bound of~\cite{huxley2003}; an explicit numerical value is not given in closed form there but can be extracted from the proof.
\end{proof}

\subsection{Higher-dimensional generalisation}
\label{app:higher-d}

For $d\ge 3$, the lattice-counting residual $|N_d(R)-V_d R^d|$ is bounded by $O(R^{\gamma_d})$ (possibly with logarithmic factors) with $V_d=\pi^{d/2}/\Gamma(d/2+1)$ the unit-ball volume. The state of the art on $\gamma_d$ is~\cite{walfisz1957,kratzel1988,heathbrown1999}:
\begin{itemize}
\item $d=3$: $\gamma_3\le 21/16$ (Heath-Brown line~\cite{heathbrown1999}); the optimal exponent is open.
\item $d=4$: $|N_4(R)-V_4 R^4|=O(R^2\log^{2/3}R)$ (Walfisz~\cite{walfisz1957}); $\gamma_4=2$ up to a logarithmic factor.
\item $d\ge 5$: $\gamma_d=d-2$ rigorously, with no logarithmic correction, by the representation theory of quadratic forms~\cite{walfisz1957}.
\end{itemize}

\begin{proposition}[$d$-dimensional disk estimator]
\label{prop:high-d}
Let the $d$-dimensional coined DTQW have a Konno-type weak limit theorem with rescaled position $X_T/T$ converging to a measure $\mu^{(d)}$ on a compact subset of the velocity ball $\{|v|\le v_{\mathrm{edge}}^{(d)}\}$. The $d$-dimensional disk estimator
\begin{equation}
\hat V_d(\alpha;T,M)=\frac{N_d(\hat R(\alpha))}{\hat R(\alpha)^d}
\end{equation}
satisfies
\begin{equation}
\mathrm{RMSE}(\hat V_d)\le C_1^{(d)}\,\mathcal L_d(T)\, T^{\gamma_d-d}+\frac{C_2^{(d)} T^{\gamma_d-d-1+\kappa^{(d)}(\alpha)}}{\sqrt M},
\end{equation}
where $\mathcal L_d(T)\equiv 1$ for $d\ne 4$ and $\mathcal L_4(T)=\log^{2/3}T$. For $d\ge 5$ the bias-floor exponent $\gamma_d-d=-2$ is rigorous and unconditional; for $d=4$ it is rigorous up to the logarithmic factor; for $d=3$ it relies on the Heath-Brown line $\gamma_3\le 21/16$.
\end{proposition}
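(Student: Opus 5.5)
The plan follows the proof of Proposition~\ref{thm:main}, now in dimension $d$. Write $R^\star=T v^\star(\alpha)$ with $v^\star(\alpha)$ the radial $\alpha$-quantile of $\mu^{(d)}$, and split
\begin{equation*}
\hat V_d-V_d=\Bigl[\frac{N_d(\hat R)}{\hat R^d}-\frac{N_d(R^\star)}{R^{\star d}}\Bigr]+\Bigl[\frac{N_d(R^\star)}{R^{\star d}}-V_d\Bigr].
\end{equation*}
By Minkowski's inequality the RMSE is at most the $L^2$ norm of the first bracket plus the absolute value of the second. This produces the additive form claimed, which is slightly weaker than the root-sum-of-squares form of Eq.~\eqref{eq:rmse}.

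The second bracket is the bias term. I insert the dimension-dependent lattice bounds listed just before the proposition: $|N_d(R)-V_dR^d|\le K_d R^{\gamma_d}\mathcal L_d(R)$. Here $\gamma_3\le 21/16$ (Heath-Brown), $\gamma_4=2$ with the $\log^{2/3}$ factor (Walfisz), and $\gamma_d=d-2$ for $d\ge5$. Dividing by $R^{\star d}$ and using $R^\star=\Theta(T)$, which holds because $\mu^{(d)}$ is a compactly supported weak limit and $v^\star>0$ for $\alpha$ bounded away from any atom at the origin, gives
\begin{equation*}
K_d\, v^{\star\,\gamma_d-d}\,T^{\gamma_d-d}\,\mathcal L_d(T).
\end{equation*}
For $d=4$ I also use $\log(Tv^\star)\le 2\log T$ for large $T$, so that the constant absorbs the shift. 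This identifies $C_1^{(d)}\propto K_d v^{\star\,\gamma_d-d}$. The rigorous or conditional status for each $d$ is then inherited directly from the status of the corresponding $\gamma_d$.

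The first bracket is the sampling term. First, if the limit measure has an atom at the origin, I apply the truncated Bahadur--Kiefer argument of Lemma~\ref{lem:truncated-bk} verbatim; its proof only uses the radial marginal and DKW, so it carries over to any $d$. This reduces the problem, up to an $O(e^{-cM})$ event on which the bounded map $R\mapsto N_d(R)/R^d$ contributes negligibly, to a quantile with continuous positive density. Bahadur--Kiefer then gives
\begin{equation*}
\hat R-R^\star=\frac{T(\alpha-F_M)}{g^{(d)}(v^\star)}(1+o_p(1)),
\end{equation*}
whose absolute size is $\Theta(T^{\kappa^{(d)}}M^{-1/2})$ once the $T$-dependence of $1/g^{(d)}(v^\star)$ is encoded in $\kappa^{(d)}$, exactly as in Proposition~\ref{thm:main}. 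Propagating this through the map $R\mapsto N_d(R)/R^d$ with an effective derivative $\Theta(T^{\gamma_d-d-1})$ yields $C_2^{(d)}T^{\gamma_d-d-1+\kappa^{(d)}}/\sqrt M$.

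The main obstacle is this delta-method step. The map $N_d(R)/R^d$ is a step function, so it has no pointwise derivative of size $R^{\gamma_d-d-1}$; its jumps are of size roughly $R^{d-2}/R^{d}$ at densely spaced radii. I would therefore not differentiate it directly. Instead I would write the difference as
\begin{equation*}
\bigl[E_d(\hat R)/\hat R^d-E_d(R^\star)/R^{\star d}\bigr]+\bigl[V_d-V_d\bigr],
\end{equation*}
where the second piece vanishes, and bound the first by $2\sup_{|R-R^\star|\le\delta}|E_d(R)|/R^d$ on the event $|\hat R-R^\star|\le\delta$. This gives a bound of bias order. The honest variance-type statement then requires the smoothed (phase-averaged) sense used in Theorem~\ref{thm:envelope}, namely averaging $E_d$ over the $O(TM^{-1/2})$ quantile window. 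If that step fails, I would state the sampling term in this averaged sense and absorb the residual into the bias term, which preserves the bias-floor exponent $\gamma_d-d$ claimed rigorously for $d\ge5$.
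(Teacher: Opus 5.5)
Your skeleton is the paper's. Its proof only says that the argument of Proposition~\ref{thm:main} carries over with Lemma~\ref{lem:gauss} replaced by the Heath-Brown/Walfisz bounds, and otherwise discusses which coins satisfy the weak-limit hypothesis, which you rightly take as given.

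The real difference is how you handle the fluctuation bracket, and there you are sounder than the paper. The slope $\Theta(T^{\gamma_d-d-1})$ in its delta method is the derivative of the envelope $R^{\gamma_d-d}$, not of $N_d(R)/R^d$. Between jumps, $N_d(R)/R^d$ has slope $-dN_d(R)/R^{d+1}\approx -dV_d/R$. Over many shells it changes by the oscillation amplitude of $E_d(R)/R^d$, which is of bias order.

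Your ``fallback'' is therefore the actual proof, and it already gives the inequality as stated. Since $R^{\gamma_d-d}\mathcal L_d(R)$ is eventually decreasing, you only need the one-sided event $\{\hat R\ge cT\}$. On that event the first bracket is at most $2\sup_{R\ge cT}|E_d(R)|/R^d\lesssim\mathcal L_d(T)\,T^{\gamma_d-d}$, which is absorbed into $C_1^{(d)}$. The claimed right-hand side is an upper bound with a nonnegative second term, so any $C_2^{(d)}\ge 0$ then works, and no phase-averaged variance statement is needed.

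This route also uses strictly less than the paper's. For large $T$, $\Pr[\hat R<cT]\le e^{-c'M}$ follows from weak convergence and a binomial tail whenever $\alpha>\mu^{(d)}(\{|v|\le c\})$. You need no Bahadur--Kiefer representation, no Lemma~\ref{lem:truncated-bk}, no positive density at $v^\star$, and no $\kappa^{(d)}$. The price is that the $M^{-1/2}$ term carries no content, which is consistent with the flat-in-$M$ RMSE the paper observes.

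One step that neither you nor the paper justifies is calling the exceptional event negligible. On $\{\hat R<cT\}$ the error $|N_d(\hat R)/\hat R^d-V_d|$ is merely $O(1)$, and the estimator is undefined at $\hat R=0$. So this event adds $O(e^{-c'M/2})$ to the RMSE with no decay in $T$, while the stated right-hand side tends to zero as $T\to\infty$ at fixed $M$. Absorbing it requires $M\gtrsim\log T$. If $\mu^{(d)}$ has a localisation atom at the origin, as the Grover measure does in $d=2$, the event that the $\lceil\alpha M\rceil$ smallest radii all stay at distance $O(1)$ has probability bounded below uniformly in $T$. This is therefore a restriction on the statement itself, not only on your proof.
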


\begin{proof}
The argument mirrors that of Proposition~\ref{thm:main} with Lemma~\ref{lem:gauss} replaced by the higher-$d$ lattice asymptotic of~\cite{walfisz1957,kratzel1988,heathbrown1999}. The $d$-dimensional Konno weak limit theorem is established for separable Hadamard coins by the product structure of the 1D Konno theorem~\cite{konno2002,konno2005}; for the $d$-dimensional Grover coin the analogous result follows in principle from Watabe-type stationary-phase analysis~\cite{watabe2008}, but a clean reference for $d=3,4$ does not exist in the literature, and we treat the Grover case in $d\ge 3$ as conditional on such a theorem (see Sec.~\ref{sec:conclusion}, open problem (iv)).
\end{proof}

\end{document}